\newtheorem{lemma}{Lemma}
\newtheorem{theorem}{Theorem}
\newtheorem{corollary}{Corollary}
\begin{document}
\title[Variational Principle
and Group Theory
for Bifurcation of Figure-eights]{Variational Principle of Action
and Group Theory
for Bifurcation of Figure-eight solutions}
\author{
Toshiaki Fujiwara$^1$,
Hiroshi Fukuda$^1$
and Hiroshi Ozaki$^2$}

\address{$^1$ College of Liberal Arts and Sciences Kitasato University, 1-15-1 Kitasato, Sagamihara, Kanagawa 252-0329, Japan}
\address{$^2$ Laboratory of general education for science and technology, Faculty of Science, Tokai University, 4-1-1 Kita-Kaname, Hiratsuka, Kanagawa, 259-1292, Japan}
\ead{fujiwara@kitasato-u.ac.jp,
fukuda@kitasato-u.ac.jp
and ozaki@tokai-u.jp}

\begin{abstract}
Figure-eight solutions are solutions to planar equal mass three-body problem
under homogeneous or inhomogeneous potentials.
They are known to be invariant under the transformation group $D_6$:
the dihedral group of regular hexagons.
Numerical investigation
shows that each figure-eight solution has 
some bifurcation points.
Six bifurcation patterns are known
with respect to the symmetry of the bifurcated solution. 

In this paper we will show the followings.
The variational principle of action
and group theory
show that the bifurcations of every figure-eight solution
are determined by the irreducible representations
of $D_6$.
Each irreducible representation has one to one correspondence
to each bifurcation.
This explains numerically observed six bifurcation patterns.
In general, in Lagrangian mechanics,
bifurcations of a periodic solution
are determined by irreducible representations of
the transformation group
that leave this solution invariant.
\end{abstract}
\vspace{2pc}
\noindent{\it Keywords}:
variational principle,
action,
group theory,
bifurcation,
figure-eight

\submitto{\jpa}

\section{Introduction}
The aim of this paper is
to give a theoretical explanation for bifurcations of  figure-eight solutions
using the variational principle of action and group theory.

The plan of this paper is the following:
Section \ref{secFigureEight} and \ref{SymGroupOfFigureEight}
are devoted to introduce  figure-eight solutions
and their symmetry $D_6$.
Section \ref{secBifurcations} gives
a short history for investigation of bifurcations of figure-eight solutions.
In section \ref{secVariationalPrinciple},
we will give an application of
the variational principle of action and group theory for bifurcation.
Then, in \sref{bifurcationOfD6},
it will be shown that how the dihedral group $D_6$
determines bifurcation patterns.
In \sref{bifurcationsFigureEight}, 
interpretations of these bifurcation patterns for  figure-eight solutions
are given.
This interpretation explains numerical results of bifurcations
of figure-eight solutions.
Summary and discussions are given in \sref{summaryDiscussions}.

\subsection{Figure-eight solutions}
\label{secFigureEight}
A figure-eight solution is a periodic solution
to planar equal mass three-body problem.
In this solution,
three masses chase each other around one eight-shaped orbit
with equal time delay:
\begin{equation}
\label{eqChoreographic}
r_k(t)=r_0(t+k T/3),\, r_0(t+T)=r_0(t).
\end{equation}
Here,  $r_k(t) \in \mathbb{R}^2$, $k=0,1,2$ represents position of particle $k$
and $T$ is the period.
The first figure-eight solution 
was numerically found by C.~Moore in 1993 \cite{Moore}
under the homogeneous potential (opposite sign of the potential energy)
\begin{equation}
U=U_h=\frac{1}{a}\sum_{i<j}\frac{1}{r_{i j}^a}
\mbox{ for } a>-2,
\end{equation}
where $r_{i j}=|r_i-r_j|$.
The total Lagrangian is
\begin{equation}
\label{Lagrangian}
L=\frac{1}{2}\sum_{k=0,1,2} \left|\frac{dr_k}{d t}\right|^2+U.
\end{equation}
A.~Chenciner and R.~Montgomery in 2000 \cite{ChencinerMontgomery}
proved its existence rigorously. 
Sometimes, this figure-eight solution is called ``the figure-eight'' solution.

Soon after that,
C.~Sim\'o found many planar $N$-body solutions \cite{SimoChoreography}
in which $N$ bodies chase each other around a single closed loop  with equal time delay.
He called such solutions ``choreographies''.
He also found a non-choreographic orbit near the figure-eight one,
that is now called  Sim\'{o}'s H solution  \cite{SimoDynamical}.

In 2004,
L.~Sbano found a figure-eight solution under the Lennard-Jones potential
\begin{equation}
U=U_{LJ}=\sum_{i<j} \left(\frac{1}{r_{i j}^6} -\frac{1}{r_{i j}^{12}}\right)
\end{equation}
for sufficiently large period $T$ \cite{Sbano2005}.
At the same time, he proved that at least one more figure-eight solution exists at the same period
\cite{Sbano2005}.
See also Sbano and J.~Southall \cite{SbanoSouthall} in 2010.
However, no one can find the predicted solution until 2016.
This is because 
figure-eight 
solutions founded at that time, under homogeneous or Lennard-Jones,
are local minimizer of the action,
while the predicted another figure-eight solution should be a saddle
\cite{Sbano2005, SbanoSouthall}.

In 2016,
one of the present authors, H.~F, 
developed a method to search figure-eight solutions
inspired by a method by 
M.~\v{S}uvakov and V.~Dmitra\v{s}inovi\'{c} \cite{Suvakov2013,Suvakov2013NumericalSearch}.
These methods are free from the action minimizing process.
He immediately applied his method to the Lennard-Jones potential
and found many figure-eight solutions. 
He named them $\alpha_\pm,\ \beta_\pm,\ \dots,\, \epsilon_\pm$
\cite{Fukuda2017}.
Here, $\pm$ solutions are connected by fold bifurcation with period $T$
and
$+$ solution has greater action than $-$ solution at the same $T$.
The figure-eight solution found by Sbano is $\alpha_-$, a local minimizer of action,
and $\alpha_+$ is the predicted saddle.

\subsection{Symmetry group for  figure-eight solutions: $D_6$}
\label{SymGroupOfFigureEight}
A figure-eight solution has many symmetries.
Let $\sigma$ and $\tau$ be operators that change the suffix of particles.
For $q=(r_0,r_1,r_2) \in \mathbb{R}^6$,
\begin{equation}
\sigma(r_0,r_1,r_2)=(r_1,r_2,r_0),
\,
\tau(r_0,r_1,r_2)=(r_0,r_2,r_1).
\end{equation}
And let $\mu_x$ be the operator representing inversion in the $Y$-axis.
For $r_k=(r_{k x}, r_{k y})$,
\begin{equation}
\mu_x (r_{k x}, r_{k y})=(-r_{k x}, r_{k y}).
\end{equation}
Moreover,
let $\mathcal{R}^{1/6}$ for time shift of $T/6$,
and $\Theta$ for time inversion,
\begin{equation}
\mathcal{R}^{1/6}q(t)=q(t+T/6),\,
\Theta q(t)=q(-t).
\end{equation}
Finally using these operators, let $\mathcal{B}$ and $\mathcal{S}$ be
\begin{equation}
\mathcal{B}=\sigma \mu_x \mathcal{R}^{1/6},
\mbox{ and }
\mathcal{S}=-\tau \Theta.
\end{equation}
Then a figure-eight solution $q_o$ is invariant under these operations,
\begin{equation}
\label{invarianceOfQo}
\mathcal{B}q_o=\mathcal{S}q_o=q_o.
\end{equation}
Inversely, if a solution is invariant for $\mathcal{B}$ and $\mathcal{S}$,
the solution is called a figure-eight solution.
Especially, the invariance under 
\begin{equation}
\mathcal{C}=\mathcal{B}^2=\sigma^{-1}\mathcal{R}^{1/3}
\end{equation}
represents the choreographic symmetry in \eref{eqChoreographic}.

By the definitions,
the operators $\mathcal{B}$ and $\mathcal{S}$ satisfy
\begin{equation}
\label{eqDefD6}
\mathcal{B}^6=\mathcal{S}^2=1
\mbox{ and }
\mathcal{B}\mathcal{S}=\mathcal{S}\mathcal{B}^{-1}.
\end{equation}
The group generated by $\mathcal{B}$ and $\mathcal{S}$ is the dihedral group $D_6$:
\begin{equation}
D_6=\{1,\mathcal{B},\mathcal{B}^2,\dots,\mathcal{B}^5,
	\mathcal{S},\mathcal{S}\mathcal{B},
	\mathcal{S}\mathcal{B}^2,\dots,
	\mathcal{S}\mathcal{B}^5
	\}.
\end{equation}
For this group, $\{1, \mathcal{B}^3\}$ is the centre,
and the commutator subgroup is $\{1, \mathcal{C},\mathcal{C}^2\}$.
In the following, we will write
\begin{equation}
\mathcal{M}=\mathcal{B}^3.
\end{equation}

We will use the invariance of the action $S$ under $D_6$,
\begin{eqnarray}
S[\mathcal{B}q]=S[\mathcal{S}q]=S[q]\label{invarianceOfAction}
\end{eqnarray}
for any periodic function $q(t)$ with period $T$.
This is true
if three masses are equal and the potential has the form
$U=\sum_{i<j} u(r_{i j})$,
because the action is invariant under the transformation 
of time shift, time inversion,
spacial inversion
and exchange of particles.

In this paper,
a group $G$ is called  a symmetry group for $q_o$ and the action $S$
if $g q_o=q_o$ and $S[g q]=S[q]$
for any periodic function $q$ and every $g\in G$.
If $G$ is a symmetry group,
then a subgroup of $G$ is also a symmetry group.
The above $D_6$ is a symmetry group for figure-eight solutions and for the action.
Linear operators will be written by calligraphic style such as
$\mathcal{B},\, \mathcal{S}$, etc.,
their eigenvalues by $'$ such as $\mathcal{B}',\, \mathcal{S}'$, etc.,
and their representation by tilde
such as $\tilde\mathcal{B},\, \tilde\mathcal{S}$, etc.


\subsection{Bifurcations of  figure-eight solutions}
\label{secBifurcations}
In 2002,
J.~Gal\'an,
F.~J.~Mu\~noz-Almaraz,
E.~Freire,
E.~Doedel,
and 
A.~Vanderbauwhede
pointed out that
the figure-eight solution and Sim\'o's H solution
are connected by a fold bifurcation
by taking the mass of one particle as a parameter \cite{PhysRevLett.88.241101}.
Namely,
the total Lagrangian they use is
\begin{equation}
\label{lagrangianGalan}
L=\frac{1}{2}\sum_k m_k\left|\frac{dr_k}{d t}\right|^2+\sum_{i<j}\frac{m_i m_j}{r_{i j}}
\end{equation}
with $m_1=m_2=1$ and $m_0$ is the parameter.
This was the first observation to connect the figure-eight and Sim\'o's H solution.
See also 
Mu\~noz-Almaraz,
Gal\'an, and
Freire \cite{Munoz2004}
in 2004.

In 2004, one of the present authors, T.~F., met Vanderbauwhede
and asked him
what will happen if we take the potential $U_h$ and take $a$ as parameter.
This is because the mass parameter $m_0\ne 1$ in \eref{lagrangianGalan} breaks the $D_6$ symmetry
of the figure-eight solution down to $D_2$.
As a result, the solution at $m_0\ne 1$ is not choreographic.
While, 
as shown by Moore \cite{Moore},
the potential $U_h$ keeps $D_6$ symmetry,
therefore the solution keeps choreographic figure-eight shape.
His team
 immediately calculated and found bifurcations
at $a=0.9966$ and $1.3424$ \cite{Munoz2018}.
The bifurcation at $a=0.9966$ yields Sim\'o's H solution.
T.~F. received their notes \cite{Munoz2018} from
Mun\~oz-Almaraz
in June 2005.

In 2018 and 2019,
the present authors investigated bifurcations of figure-eight solutions
under the homogeneous potential $U_h$ with parameter $a$ and 
of $\alpha_\pm$ solution under $U_{LJ}$ with period $T$
using Morse index of the action.
Here Morse index stands for number of negative eigenvalues of the second derivative of the action.
We confirmed the results of Mun\~oz-Almaraz et al.~\cite{Munoz2018} in $U_h$,
and found many bifurcations for $\alpha_\pm$ \cite{Fukuda2018, Fukuda2019}
in $U_{LJ}$.

At that time, by  numerical calculations,
we noticed that  bifurcations occur when Morse index changed,
and also noticed that the symmetry of eigenfunction that 
is responsible to  change of Morse index
determines the symmetry of bifurcated solutions
and bifurcation patterns.
The correspondence between symmetry of eigenfunction
and symmetry of bifurcated function and bifurcation pattern
is one-to-one.
We consider the reason and give an answer.
These bifurcations are explained by the variational principle of action
and group theory.
We will show that bifurcation patterns correspond to irreducible representations
of $D_6$.

Group theoretical method of bifurcation
is well known
in condensed matter physics and particle physics.
We apply this method to bifurcations for periodic solutions.
As shown in this paper,
we can understand bifurcations of a periodic solution
as a zero point of first derivative of the action.
This will give an alternative point of view to investigate
the bifurcations,
other than that based on  Poincaré map or Floquet matrix.


\section{Variational principle of action for bifurcation}
\label{secVariationalPrinciple}
To show the basic idea for the variational principle of action for bifurcation,
let us consider a function $f(x)$ whose derivatives are $f'(x),\, f''(x),\, f'''(x),\dots$.
The Taylor series reveals the function $f$ near an arbitrary point $x_0$,
\begin{equation}
f(x_0+r)=f(x_0)+f'(x_0)r+\frac{f''(x_0)}{2}r^2+\frac{f'''(x_0)}{3!}r^3+\dots.
\end{equation}
A stationary point is a point that satisfies $f'(x_0)=0$.
If two stationary points $x_o$ and $x_b=x_o+r$ exist and $r \to 0$,
we have $f''(x_o) \to 0$, because
\begin{equation}
0=\frac{f'(x_o+r)-f'(x_o)}{r} \to f''(x_o).
\end{equation}
Inversely, for a stationary point $x_o$,
if $\kappa=f''(x_o)$ crosses zero and $f'''(x_o)\ne 0$,
the first derivative of the Taylor series at $x_0$
is given by
\begin{equation}
\partial_r f(x_o+r)
=\left(\kappa+\frac{f'''(x_o)}{2}r + O(r^2)\right)r.
\end{equation}
This has two zeros:
$r_o=0$ and  
\begin{equation}
r_b=-2\kappa/f'''(x_o)+O(\kappa^2),
\end{equation}
corresponding to  stationary points $x_o$ and $x_b$.
The value of $f(x_b)$ and $f''(x_b)$ are
\begin{eqnarray}
f(x_b)=f(x_o+r_b)=f(x_o)+\frac{2\kappa^2}{3(f'''(x_o))^2}+O(\kappa^3),\\
f''(x_b)=f''(x_o+r_b)=-\kappa+O(\kappa^2).
\end{eqnarray}
Thus,
higher derivatives of function $f$ at a stationary point $x_o$
can tell us the existence of another stationary point
and the values of $x_b$, $f(x_b)$ and $f''(x_b)$ in a series of $\kappa$.

By the variational principle of action,
a stationary point of action is a solution of the equations of motion.
Almost the same procedure for the action 
makes a theory of bifurcation for periodic solutions.
In \sref{secDerivativesOfAction},
the second derivative of the action at a solution $q_o$
will define the
Hessian operator $\mathcal{H}(q_o)$ of the action.
In \sref{secNecessaryCondition}, the necessary condition for bifurcation will be given.
In \sref{secExpansionOfAction},
the action will be expanded by the eigenfunctions 
of $\mathcal{H}(q_o)$.
Here, the eigenfunctions are characterized by 
the irreducible representations of 
the symmetry group $G$ for $q_o$ and $S$. 
Then the action is a function of coefficients
of eigenfunctions that are infinitely many.
A method, that is called Lyapunov-Schmidt reduction,
will be used to reduce the number of variables
to the degeneracy number $d$ of an 
eigenvalue.
Lyapunov-Schmidt reduction will be shown in 
\sref{secLyapunovSchmidt}.
It will be shown that
the necessary condition is also a sufficient condition
for $d=1$ and $d=2$
in sections \ref{secNonDegeneratedCase} and \ref{secDegeneratedCase}
respectively.
In \sref{secProjectionOperator},
a quite useful method 
to predict the existence of bifurcation 
and to predict the symmetry of bifurcated solution
utilizing a projection operator is shown.
Finally, in \sref{secInheritance}, two equalities are listed.
One is useful to calculate higher derivatives of action,
and the other is a symmetry of Lyapunov-Schmidt reduced action.

The methods described here,
utilizing irreducible representation,
Lyapunov-Schmidt reduction
and projection operator,
are well known
in condensed matter physics, particle physics, etc.
to investigate
symmetry breaking and bifurcations in 
material, universe, etc. 
\cite{sattinger,golubitsky,golubitskyII,IkedaMurota}.
However, as far as we know, 
A.~Chenciner, J.~F\'{e}joz and R.~Montgomery \cite{RotatingEight} 
gave only one application of these methods to bifurcations of a periodic solution. 
In this section, let us shortly summarize these methods in case the readers do not 
well versed in. 

Although the aim of this paper is to describe bifurcations of
figure-eight solutions,
let us treat bifurcations of a general periodic solution $q_o$
with 
symmetry 
group $D_n$,
instead of  figure-eight solutions and $D_6$.


\subsection{Higher derivatives of action and definition of Hessian}
\label{secDerivativesOfAction}
Let us consider a system
that described by generalized coordinates $q_i$, $i=1,2,\dots,N$,
with action
\begin{equation}
S[q]=\int_0^T d t \left(
		\frac{1}{2}\sum_{i=1,2,\dots,N}m_i\left(\frac{dq_i}{d t}\right)^2+U(q)
	\right).
\end{equation}
We consider the function space with period $T$,
\begin{equation}
q(t+T)=q(t).
\end{equation}
The derivatives of the action around a function $q$ are
\begin{equation}
S[q+\delta q]
=S[q]+\delta S[q]+\frac{1}{2}\delta^2 S[q]+\frac{1}{3!}\delta^3 S[q]+\frac{1}{4!}\delta^4 S[q]+\dots.
\end{equation}
Here, the
variation $\delta q(t)$ is also periodic with the same period $T$.
The derivatives are,
\begin{eqnarray}
\delta S[q]
=\int_0^T d t \sum_i \delta q_i\left(
		-m_i\frac{d^2 q_i}{d t^2}+\frac{\partial U}{\partial q_i}
	\right),\\
\delta^2 S[q]
=\int_0^T d t \sum_i \delta q_i\left(
		-\delta_{i j}m_i\frac{d^2}{d t^2}+\frac{\partial^2 U}{\partial q_i\partial q_j}
	\right)
	\delta q_j,\\
\mbox{and }\delta^n S[q]=\braket{(\delta q)^n}
\mbox{ for } n\ge 3.
\dots.
\end{eqnarray}
Here, $\delta_{ij}$ is the Kronecker delta,
and abbreviated notations $\braket{(\delta q)^n}$ are
\begin{eqnarray*}
\braket{(\delta q)^3}
=\int_0^T d t \sum_{ijk}\frac{\partial^3 U}{\partial q_i \partial q_j \partial q_k}
		\delta q_i \delta q_j \delta q_k,\\
\braket{(\delta q)^4}
=\int_0^T d t \sum_{ijk\ell}\frac{\partial^4 U}{\partial q_i \partial q_j \partial q_k \partial q_\ell}
		\delta q_i \delta q_j \delta q_k \delta q_\ell,
\mbox{ and so on}.
\end{eqnarray*}
In general, we will use  abbreviated notations $\braket{fg\dots h}$ for
\begin{equation}
\braket{fg\dots h}
=\int_0^T d t \sum_{ij\dots\ell}\left(\frac{\partial}{\partial q_i}\frac{\partial}{\partial q_j}
			\dots\frac{\partial U}{\partial q_\ell}\right)
		f_i g_j \dots h_\ell.
\end{equation}
The operator in the second derivative defines the Hessian operator $\mathcal{H}$:
\begin{equation}
\mathcal{H}_{i j}=
-\delta_{ij}m_i\frac{d^2}{d t^2}+\frac{\partial^2 U}{\partial q_i\partial q_j}.
\end{equation}
By the variational principle,
a stationary point that satisfies $\delta S[q]=0$ is a solution of the equations of motion:
\begin{equation}
\label{eqEquationOfMotion}
m_i \frac{d^2 q_i}{d t^2}=\frac{\partial U}{\partial q_i}.
\end{equation}

\subsection{Necessary condition for bifurcation}
\label{secNecessaryCondition}
What will happen at a bifurcation point?
Consider a region of a parameter $\xi$ where 
both original solution $q_o$ and a bifurcated solution $q_b$ exist
and $q_b \to q_o$ for $\xi\to \xi_0$.
The point $\xi_0$ is a bifurcation point.
Let $q_b=q_o+R\Phi$ with
\begin{equation}
||\Phi||^2=\int_0^T d t \Phi(t)^2=1.
\end{equation}
Since both $q_b$ and $q_o$ satisfy the equations of motion \eref{eqEquationOfMotion},
the difference satisfies
\begin{equation*}
R m_i \frac{d^2 \Phi_i}{d t^2}=R\frac{\partial^2 U}{\partial q_i \partial q_j}\Phi_j + O(R^2).
\end{equation*}
Namely,
\begin{equation}
\label{eqNecessaryCondition}
\mathcal{H}_{i j}\Phi_j = O(R) \to 0 \mbox{ for } \xi \to \xi_0.
\end{equation}
So, one of the eigenvalue of $\mathcal{H}$ tends to zero for $\xi$ tends to a bifurcation point.
This is a necessary condition for bifurcation.

To get more precise condition,
let us expand $q_b-q_o$ by eigenfunctions of $\mathcal{H}$:
\begin{eqnarray}
q_b-q_o=r\phi + \sum_\alpha r\epsilon_\alpha \psi_\alpha,\\
\mathcal{H}\phi=\kappa\phi,
\,
\mathcal{H}\psi_\alpha=\lambda_\alpha\psi_\alpha,
\end{eqnarray}
where $\kappa \to 0$ and $\lambda_\alpha$ remain finite for $\xi \to \xi_0$.
The functions $\phi$ and $\psi_\alpha$ are normalized orthogonal functions.
Following the same arguments from \eref{eqEquationOfMotion} to
\eref{eqNecessaryCondition}, we get
\begin{equation}
r\mathcal{H}\left(
	\phi + \sum_\alpha \epsilon_\alpha \psi_\alpha
	\right)
	=r\kappa\phi + \sum_\alpha \lambda_\alpha r\epsilon_\alpha \psi_\alpha
	=O(r^2).
\end{equation}
Dividing it by $r\ne 0$,
\begin{equation}
\kappa\phi + \sum_\alpha \lambda_\alpha \epsilon_\alpha \psi_\alpha
=O(r).
\end{equation}
Therefore,
\begin{equation}
\kappa=O(r)
\mbox{ and }
\epsilon_\alpha = O(r).
\end{equation}
So, the function $\phi$ contributes the difference $q_b-q_o$ in $O(r)$,
whereas $\psi_\alpha$ in $r\epsilon_\alpha=O(r^2)$.

\subsection{Expression for higher derivatives in terms of eigenfunctions of $\mathcal{H}$}
\label{secExpansionOfAction}
Now, we expand an arbitrary variation $\delta q$ in terms of eigenfunctions of $\mathcal{H}$:
\begin{equation}
\label{eqExpansionOfDeltaQ}
\delta q
	= r \phi + \sum_\alpha r\epsilon_\alpha \psi_\alpha.
\end{equation}
In this expansion,
we exclude eigenfunctions that always belong to zero eigenvalue independent from the parameter.
Because these eigenfunctions are connected to
conservation laws,
they are irrelevant to bifurcation.
For example, eigenfunction $d q_o/d t$ belongs to zero eigenvalue of $\mathcal{H}$
that is connected to the energy conservation law.
Actually, 
$q_o(t)+\epsilon d q_o/d t +O(\epsilon^2)$
is just a time shift of $q_o(t+\epsilon)$.
So, the eigenfunction $d q_o/d t$ is irrelevant to bifurcation.
Therefore, all of the eigenfunctions in \eref{eqExpansionOfDeltaQ} should be understood to 
have non-zero eigenvalue in general.
Then, for $\xi\to \xi_0$,
only $\kappa$, the eigenvalue of $\phi$, tends to 0.
In the following,
we consider a sufficiently small region of parameter $\xi$
where only $\kappa \to 0$ for $\xi \to \xi_0$ and the absolute value of  all $\lambda_\alpha$ 
are greater than a positive value.

The eigenvalue $\kappa$ may have  degeneracy $d=2$.
(The irreducible representations of $D_n$ has 
degeneracy at most $d=2$.)
In this case, there are two linearly independent eigenfunctions
for $\mathcal{H}\phi_\gamma = \kappa \phi_\gamma$, $\gamma=1,2$.
In this case,
we use polar coordinates $(r, \theta)$. 
Namely, $r\phi(\theta)=r(\cos(\theta)\phi_1+\sin(\theta)\phi_2)$ if $d=2$.
If $d=1$, $\phi(\theta)$ should be understood as $\phi$.

Now consider the expansion for $S[q_o+\delta q]-S[q_o]$ for \eref{eqExpansionOfDeltaQ}
for $q_o$ that is a solution of the equation of motion.
By the variational principle of action, the first derivative vanishes: $\delta S[q_o]=0$.
The higher derivatives are
\begin{equation}
\label{eqSecondOrder}
\delta^2 S[q_o]
=r^2\left(
	\kappa
	+\sum_\alpha \lambda_\alpha \epsilon_\alpha^2
\right),
\end{equation}
\begin{equation}
\label{eqHigherOrder}
\delta^n S[q_o]
=r^n
	\Braket{\Big(\phi(\theta)+\sum_\alpha \epsilon_\alpha \psi_\alpha\Big)^n}
\mbox{ for } n \ge 3.
\end{equation}
Thus,
using the expressions \eref{eqSecondOrder} and \eref{eqHigherOrder},
the difference of the action is written
by the infinitely many variables
$r, \theta, \epsilon$:
\begin{equation}
S(r,\theta,\epsilon)
=S[q_o+\delta q]-S[q_o]
=\frac{1}{2}\delta^2 S[q_o]+\sum_{n\ge 3} \frac{1}{n!}\delta^n S[q_o].
\end{equation}

\subsection{Lyapunov-Schmidt reduction}
\label{secLyapunovSchmidt}
To find a bifurcated solution $q_b$,
we solve the stationary conditions
\begin{equation}
\partial_r S(r,\theta,\epsilon)=\partial_\theta S(r,\theta,\epsilon)=\partial_\epsilon S(r,\theta,\epsilon)=0.
\end{equation}
Instead of solving these equations at once,
we first solve the equations for $\epsilon$ for arbitrary $r,\, \theta$.
Namely,
\begin{eqnarray}
\fl
\frac{\partial}{\partial \epsilon_\alpha}S(r,\theta,\epsilon)
=r^2\lambda_\alpha \epsilon_\alpha
	+\sum_{n\ge 3}\frac{r^n}{(n-1)!}
		\Braket{
			\Big(\phi(\theta)+\sum_\beta \epsilon_\beta \psi_\beta\Big)^{n-1}
			\psi_\alpha
		}
=0.
\end{eqnarray}
Dividing it by $r^2\lambda_\alpha$, we obtain
a recursive equation for $\epsilon$:
\begin{equation}
\label{recEqForEpsilon}
\epsilon_\alpha
=-\frac{1}{\lambda_\alpha}
	\sum_{n\ge 3}\frac{r^{n-2}}{(n-1)!}
		\Braket{
			\Big(\phi(\theta)+\sum_\beta \epsilon_\beta \psi_\beta\Big)^{n-1}
			\psi_\alpha
		}.
\end{equation}
Solving this equation, we get $\epsilon_\alpha$ in a series of $r$ uniquely:
\begin{equation}
\epsilon_\alpha(r,\theta)
=-\frac{r}{2\lambda_\alpha}\Braket{\phi(\theta)^2 \psi_\alpha}
	+O(r^2).
\end{equation}
Substituting this solution into $S(r,\theta,\epsilon)$, we obtain a reduced action
\begin{equation}
S_{LS}(r,\theta)=S(r,\theta,\epsilon(r,\theta)).
\end{equation}
This procedure is known as Lyapunov-Schmidt reduction.
The reduced action is a function of 
$r, \theta$:
\begin{equation}
\label{SLS}
S_{LS}(r,\theta)=\frac{\kappa}{2}r^2+\sum_{n\ge 3}\frac{A_n(\theta)}{n!}r^n.
\end{equation}
Explicit expressions for lower $A_n$ are
\begin{eqnarray}
\label{A3A4andA5}
\frac{A_3(\theta)}{3!}
=\frac{1}{3!}\braket{\phi(\theta)^3}&,\label{A3}\\
\frac{A_4(\theta)}{4!}
=\frac{1}{4!}\braket{\phi(\theta)^4}&
			-\frac{1}{(2!)^3}\sum_\alpha
				\frac{\Braket{\phi(\theta)^2 \psi_\alpha}^2}{\lambda_\alpha}
				,\label{A4}\\
\frac{A_5(\theta)}{5!}
=\frac{1}{5!}\braket{\phi(\theta)^5}&
-\frac{1}{2!3!}\sum \frac{\braket{\phi(\theta)^2\psi_\alpha}\braket{\phi^3\psi_\alpha}}
					{\lambda_\alpha}\nonumber\\
&+\frac{1}{(2!)^3}\sum \frac{
					\braket{\phi(\theta)^2\psi_\alpha}
					\braket{\phi\psi_\alpha\psi_\beta}
					\braket{\phi(\theta)^2\psi_\beta}
					}
					{\lambda_\alpha \lambda_\beta},\label{A5}
\end{eqnarray}
and so on.

\subsection{Bifurcations for non-degenerate case}
\label{secNonDegeneratedCase}
If the eigenvalue $\kappa$ is not degenerate,
the reduced action is a function of single variable $r$:
\begin{equation}
S_{LS}(r)=\frac{\kappa}{2}r^2+\sum_{n\ge 3}\frac{A_n}{n!}r^n.
\end{equation}
Then the first derivative 
\begin{equation}
S'_{LS}(r)=\left(\kappa+\sum_{n\ge 3}\frac{A_n}{(n-1)!}r^{n-2}\right)r
\end{equation}
has two zeros,
$r=0$ for the original solution and
\begin{equation}
\label{eqBifurcatedSolutionForR}
\kappa+\sum_{n\ge 3}\frac{A_n}{(n-1)!}r^{n-2}
=\kappa+\frac{A_3}{2}r+\frac{A_4}{3!}r^2 + \dots =0
\end{equation}
for the bifurcated solution.

If $A_3\ne 0$, the equation \eref{eqBifurcatedSolutionForR} has
the solution
\begin{equation}
\label{eqFirstOrderRb}
r_b=-\frac{2}{A_3}\kappa-\frac{4A_4}{3 A_3^3}\kappa^2+O(\kappa^3).
\end{equation}
Bifurcated solution exists in both negative and positive side of $\kappa$.
For this case, we have
\begin{eqnarray}
S_{LS}(r_b)=S[q_b]-S[q_o]
	=\frac{2}{3 A_3^2}\kappa^3
		+\frac{2A_4}{3 A_3^4}\kappa^4
		+O(\kappa^5),\label{eqFirstOrderSLS}\\
S''_{LS}(r_b)
	=-\kappa
		+\frac{2A_4}{3 A_3^2}\kappa^2
		+O(\kappa^3)\label{eqFirstOrderSLSSecond}.
\end{eqnarray}
Therefore, for sufficiently small $\kappa$,
the difference of action $S[q_b]-S[q_o]$  is  proportional to $\kappa^3$ and the coefficient must be positive.
And the second derivative must have opposite sign and the same magnitude for the original solution
for small $\kappa$.
In  \ref{EigenvaluesOfH}, we will show that $S''_{LS}(r_b)$
gives correct value of 
the eigenvalue of the Hessian $\mathcal{H}$ for the bifurcated solution to 
the order in \eref{eqFirstOrderSLSSecond}.

Although
this is the simplest bifurcation,
this case describes 
``both-side'' bifurcation or fold one
depending on the relation between the parameter $\xi$ and the eigenvalue $\kappa$:
\begin{equation}
\xi-\xi_0
=a_1 \kappa + a_2 \kappa^2 + \dots.
\end{equation}

If $a_1\ne 0$, where correspondence between $\xi$ and $\kappa$ is one to one,
\eref{eqFirstOrderRb} describes ``both-side'' bifurcation.
The bifurcated solution exists for both sides of $\xi-\xi_0$.

While if $a_1=0$ and $a_2\ne 0$,
where the correspondence between $\xi$ and $\kappa$ is one to two,
\eref{eqFirstOrderRb} describes fold bifurcation.
The bifurcated solution exists only one side of $\xi$:
$\xi-\xi_0>0$ if $a_2>0$ or $\xi-\xi_0<0$ if $a_2<0$.
From the point of view of $\xi$,
two solutions are created or annihilated at $\xi_0$.

Now, let us proceed to the case $A_3=0$ and $A_4\ne 0$.
This is order $2$ bifurcation
because $\kappa=O(r^2)$.
In general, the solution of \eref{eqBifurcatedSolutionForR} is
\begin{equation}
r_b=\pm\sqrt{\frac{-6\kappa}{A_4}}+\frac{3A_5}{4 A_4^2}\kappa+O(\kappa^{3/2}).
\end{equation}
This describes ``one-side''  bifurcation.
The bifurcated solutions exist at one side of $\kappa$.
Two bifurcated solutions emerge for 
$\kappa>0$ if $A_4<0$,
or $\kappa<0$ if $A_4>0$.
The terms $A_{2 n+1}$, $n\ge2$ breaks $r \to -r$ symmetry of $S_{LS}$.

Although these general cases will be interesting,
sometimes a symmetry makes $S_{LS}(-r)=S_{LS}(r)$.
This symmetry makes $A_{2 n+1}=0$ for all $n\ge 1$.
In this case, the solution of \eref{eqBifurcatedSolutionForR} is
\begin{equation}
\label{order2Rb}
r_b=\pm \left(\frac{-6\kappa}{A_4}\right)^{1/2}
\left(
1+\frac{3A_6}{20 A_4^2}\kappa + O(\kappa^{2})
\right).
\end{equation}
The two solutions have exactly equal $|r_b|$,
$S_{LS}$ and $S''_{LS}$,
\begin{eqnarray}
S_{LS}(r_b)=-\frac{3}{2 A_4}\kappa^2-\frac{3A_6}{10 A_4^3}\kappa^3+O(\kappa^4),\label{order2SLS}\\
S''_{LS}(r_b)=-2\kappa+\frac{3A_6}{5 A_4^2}\kappa^2+O(\kappa^3) \label{order2S''LS}.
\end{eqnarray}
Since $A_4$ and $\kappa$ have opposite signs, the sign of $S_{LS}(r_b)$ is the same as $\kappa$.

In general, if $A_{3}=A_{4}=\dots=A_{n-1}=0$ and $A_{n} \ne 0$ for odd $n$
\begin{equation}
r_b=\left(-\frac{(n-1)!}{A_n}\kappa\right)^{1/(n-2)}
\end{equation}
for both side of $\kappa$,
while for even $n$
\begin{equation}
r_b=\pm \left(-\frac{(n-1)!}{A_n}\kappa\right)^{1/(n-2)}
\end{equation}
for one side of $\kappa$: $\kappa/A_n <0$.

What will happen if $A_n=0$ for all $n$?
In this case, however,
the reduced action is exactly
$S_{LS}(r)=S(q_o)+\kappa r^2/2$.
This action behaves badly.
Consider the behaviour  of this reduced action at sufficiently large $r=M$.
For the small interval of $-1/M < \kappa < 1/M$,
the change of action is huge, since $-M/2 < \kappa r^2/2 <M/2$.
Although, we didn't find a logic to exclude this case,
this case unlikely exists. 
We simply neglect this case in this paper.
Actually,
as far as we know,
there are no symmetry that makes $A_4=0$.

As shown in this subsection, 
for non-degenerate case,
the condition $\kappa \to 0$ is a sufficient condition for bifurcation,
except for the case in the last paragraph
that unlikely exists.


\subsection{Bifurcations for degenerate case}
\label{secDegeneratedCase}
If the eigenvalue $\kappa$ has degeneracy $d=2$,
the reduced action has $\theta$ dependence:
\begin{equation}
S_{LS}(r,\theta)=\frac{\kappa}{2}r^2+\sum_{n\ge 3}\frac{A_n(\theta)}{n!}r^n.
\end{equation}
Then the condition for stationary point for $\theta$ is
\begin{equation}
\label{eqTheta}
\partial_\theta S_{LS}(r,\theta)=0.
\end{equation}
Solving this equation, and substituting the solution $\theta(r)$ into $S_{LS}(r,\theta)$
yields one variable function $S_{LS}(r,\theta(r))$.
Then the same arguments for non-degenerate case
will be used to describe the sufficient condition for bifurcation.

There may several solutions $\theta(r)$ of equation \eref{eqTheta}.
In such case, each solution yields each $S_{LS}(r,\theta(r))$.
As a result, multiple bifurcated solutions corresponds to each solution $\theta(r)$
will emerge from the original at $\kappa=0$.

Can we tell what $\theta$ will give stationary point(s)?
The information must be in the behaviour of $S_{LS}(r,\theta)$
that is determined by 
the symmetry group $G$.
In the next subsection,
we will show that
the group structure of $G$ surely determines the
direction $\theta$ for bifurcation.

\subsection{Projection operator}
\label{secProjectionOperator}
An idempotent operator defined by $\mathcal{P}^2=\mathcal{P}$ is a projection operator.
We are interested in projection operators
that leave
$q_o$ invariant: $\mathcal{P} q_o=q_o$.
Let $G$ be a  symmetry group for $q_o$
and the action.
Then average of elements of  $G$ defines a projection operator
\begin{equation}
\mathcal{P}_{G}=\frac{1}{|G|}\sum_{g \in G}g,
\end{equation}
where $|G|$ is the number of the elements in $G$.
Since $G$ is a group, $g G=G$ then $g\mathcal{P}_G=\mathcal{P}_G$
 and $\mathcal{P}_G^2=\mathcal{P}_G$.
So $\mathcal{P}_{G}$ is a projection operator
that satisfies $\mathcal{P}_{G}q_o=q_o$.
It is obvious if $g f=f$ for every $g \in G$ then $\mathcal{P}_{G}f=f$.
The inverse is also true:
if $\mathcal{P}_{G}f=f$ then 
$g f=g(\mathcal{P}_{G}f)=(g\mathcal{P}_{G})f=\mathcal{P}_{G}f=f$
for every $g \in G$.

There is another way to make projection operator.
An arbitrary $g \in G$ defines the cyclic subgroup 
\begin{equation}
\{1, g, g^2,\dots,g^{n_g-1}\}
\end{equation}
with an integer $n_g$ that is the smallest natural number 
of $g^{n_g}=1$.
This $n_g$ is called the order of the element $g$.
If there is no confusions,
let us write this projection operator $\mathcal{P}_g$ for brevity:
\begin{equation}
\mathcal{P}_g=\frac{1}{n_g}\sum_{n=0,1,2,\dots,n_g-1}g^n.
\end{equation}
For $G=D_6$,
\begin{equation}
\label{projectionOperators}
\fl
\mathcal{P_{C}}=\frac{1}{3}(1+\mathcal{C}+\mathcal{C}^2),\,
\mathcal{P_{M}}=\frac{1}{2}(1+\mathcal{M}),\,
\mathcal{P_{S}}=\frac{1}{2}(1+\mathcal{S}),
\mathcal{P_{MS}}=\frac{1}{2}(1+\mathcal{MS})
\end{equation}
and
\begin{equation}
\mathcal{P}_{D_6}=\mathcal{P_{C}}\mathcal{P_{M}}\mathcal{P_{S}}.
\end{equation}
A projection operator $\mathcal{P}$
decomposes any function $f$ into two parts
$f=\mathcal{P}f+(1-\mathcal{P})f$,
where $\mathcal{P}f$ belongs to the eigenvalue $\mathcal{P}'=1$
and $(1-\mathcal{P})f$  belongs to $\mathcal{P}'=0$.

For a symmetry group $G$ for $q_o$ and $S$, 
the projection operator $\mathcal{P}_G$ splits 
eigenfunctions of $\mathcal{H}$ 
into two sets: 
one set  that belong to $\mathcal{P}_G'=1$ 
and the other set  that belong to 
$\mathcal{P}_G'=0$. 
We write set of  eigenfunctions
$\{e_1,e_2,e_3,\dots\}
=\{\phi_1,\dots,\phi_d,\psi_1,\psi_2,\psi_2,\dots\}$
and
\begin{equation}
r\phi(\theta)+r\sum \epsilon_\alpha \psi_\alpha
=\sum_{\mathcal{P}_G e_\beta=e_\beta} \zeta_\beta e_\beta
+\sum_{\mathcal{P}_G e_\gamma=0} \eta_\gamma e_\gamma.
\end{equation}
%
Expansion of function around $q_o$ by eigenfunctions $e_\alpha$
defines action in terms of $\zeta, \eta$:
\begin{equation}
\label{actionInFormI}
S(\zeta,\eta)
=S\left[
	q_o
	+\sum_{\mathcal{P}_{G}e_\beta=e_\beta} \zeta_\beta e_\beta
	+\sum_{\mathcal{P}_{G}e_\gamma=0} \eta_\gamma e_\gamma
	\right].
\end{equation}
The invariance
$S[g q]=S[q]$ for arbitrary $q$
and $g q_o=q_o$
yields
another expression for the action:
\begin{equation}
\label{actionInFormII}
S(\zeta,\eta)
=S{\left[
	q_o
	+\sum_{\mathcal{P}_Ge_\beta=e_\beta} \zeta_\beta e_\beta
	+\sum_{\mathcal{P}_Ge_\gamma=0} \eta_\gamma\, g e_\gamma
	\right]}.
\end{equation}
The following well known theorem holds:
\begin{theorem}
\label{theoremProjectionOperator}
If a group $G$ is a symmetry group for $q_o$ and the action,
a stationary point in subspace $\mathcal{P}'_G=1$
is a stationary point in whole space,
namely,  a solution of the equations of motion.
\end{theorem}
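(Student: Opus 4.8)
The plan is to exploit the two expressions \eref{actionInFormI} and \eref{actionInFormII} for the \emph{same} function $S(\zeta,\eta)$, which differ only in that each transverse direction $e_\gamma$ is replaced by its image $g e_\gamma$ under an arbitrary $g\in G$. Write $q^{\ast}=q_o+\sum_{\mathcal{P}_G e_\beta=e_\beta}\zeta^{\ast}_\beta e_\beta$ for the candidate point, where $(\zeta^{\ast},\eta=0)$ is assumed stationary inside the subspace $\mathcal{P}'_G=1$, i.e.\ $\partial_{\zeta_\beta}S(\zeta^{\ast},0)=0$ for every $\beta$. Since $g q_o=q_o$ and, as shown just above the theorem, $\mathcal{P}_G e_\beta=e_\beta$ forces $g e_\beta=e_\beta$ for every $g\in G$, the point $q^{\ast}$ is itself $G$-invariant. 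The only thing left to establish is that the transverse derivatives vanish as well: $\partial_{\eta_\gamma}S(\zeta^{\ast},0)=0$ for every $\gamma$ with $\mathcal{P}_G e_\gamma=0$.

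First I would differentiate both expressions for $S(\zeta,\eta)$ with respect to $\eta_\gamma$ and set $\eta=0$. Because the first variation is linear in the variation (see its integral form in \sref{secDerivativesOfAction}), form \eref{actionInFormI} yields the first variation of $S$ at $q^{\ast}$ evaluated on the direction $e_\gamma$, while form \eref{actionInFormII} yields the first variation at $q^{\ast}$ evaluated on the direction $g e_\gamma$. As the two forms are literally the same function of $(\zeta,\eta)$, these two directional derivatives must coincide for every $g\in G$.

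Then I would average this equality over the group. Using once more the linearity of the first variation in its direction, averaging $g e_\gamma$ over $g\in G$ replaces the probing direction by $\mathcal{P}_G e_\gamma$, which is zero because $e_\gamma$ was chosen in the subspace $\mathcal{P}'_G=0$. Hence $\partial_{\eta_\gamma}S(\zeta^{\ast},0)=0$ for every $\gamma$. Together with the assumed $\partial_{\zeta_\beta}S(\zeta^{\ast},0)=0$, this shows that all first derivatives of $S$ vanish at $(\zeta^{\ast},0)$; by the variational principle of \sref{secDerivativesOfAction}, $q^{\ast}$ is therefore a solution of the equations of motion, which proves the theorem.

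The argument is short and its only delicate points are bookkeeping rather than substance: one must check that the first variation really is linear in the variation direction (immediate from its integral expression) and that \eref{actionInFormII} holds as an identity in $(\zeta,\eta)$ prior to differentiation (it does, since it follows from $S[g q]=S[q]$ applied term by term). The main conceptual step, and the only place where the group structure is used, is the averaging: the invariance of the gradient of $S$ at a $G$-fixed point lets the projector $\mathcal{P}_G$ act on the probing direction and annihilate everything transverse to the fixed subspace. This is exactly the finite-group version of Palais' principle of symmetric criticality, and I expect the full write-up to amount to essentially the displayed identities above.
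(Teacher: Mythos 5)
Your proposal is correct and takes essentially the same route as the paper's own proof: both exploit the equality of the two expressions \eref{actionInFormI} and \eref{actionInFormII}, differentiate with respect to $\eta_\gamma$ at $\eta=0$, and average over $g\in G$ so that the probing direction is replaced by $\mathcal{P}_G e_\gamma=0$. The only difference is presentational --- the paper performs the group average inside an explicit Taylor expansion of the action, writing $e_\gamma=(1-\mathcal{P}_G)e_\gamma$ so that the transverse derivative becomes $\sum_{n\ge 3}\frac{1}{(n-1)!}\Braket{f^{n-1}\,(1-\mathcal{P}_G)\mathcal{P}_G e_\gamma}=0$, whereas you invoke the linearity of the first variation in its direction directly, which is the same mechanism.
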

\begin{proof}
Let $f=\sum_{\mathcal{P}_G\, e_\beta=e_\beta} \zeta_\beta e_\beta$,
then expansion of the action in \eref{actionInFormII} yields
\begin{eqnarray}
\fl
S(\zeta,\eta)
&=S[q_o]+ \sum \frac{\lambda_\beta}{2}\zeta_\beta^2
	+\sum \frac{\lambda_\gamma}{2}\eta_\gamma^2
	+\sum_{n\ge 3} \frac{1}{n!}\Braket{
		\Big(f+\sum_{\mathcal{P}_G e_\gamma=0}
		\eta_\gamma g e_\gamma\Big)^n
		}.\label{expansionII}
\end{eqnarray}
Since the eigenvalue and the inner product of eigenfunctions
are invariant of $g$, the second order terms 
in \eref{expansionII} are unchanged.
See \ref{secInvarianceOfIntegrals} for detail.
Then the first derivative of action  for $\eta_\gamma$
at all $\eta=0$ yields
\begin{equation}
\label{actionInG}
\left.\frac{\partial }{\partial \eta_\gamma}S[\zeta,\eta]\right|_{\eta=0}
=\sum_{n\ge 3} \frac{1}{(n-1)!}\Braket{
		f^{n-1}
		\, (1-\mathcal{P}_G)g e_\gamma
		}.
\end{equation}
Here we use the expression 
$e_\gamma=(1-\mathcal{P}_G)e_\gamma$
for $\mathcal{P}_G e_\gamma=0$.
Then the average for $g\in G$ yields
\begin{eqnarray}
\left.\frac{\partial }{\partial \eta_\gamma}S(\zeta,\eta)\right|_{\eta=0}
=\sum_{n\ge 3} \frac{1}{(n-1)!}\Braket{
		f^{n-1}
		\, (1-\mathcal{P}_G)\mathcal{P}_G e_\gamma
		}
=0.
\end{eqnarray}
Therefore, a stationary point in $\eta=0$ subspace
(namely $\mathcal{P}'_G=1$ subspace) is a stationary point
in whole space.
\end{proof}

Now, the following corollary is quite useful
\cite{sattinger, golubitsky, golubitskyII}.
\begin{corollary}
\label{corollaryProjectionOperator}
If a symmetry group $G$ for $q_o$ and $S$ exists 
such that
the solution $\mathcal{P}_G\phi(\theta)=\phi(\theta)$ is one dimension,
a bifurcation occurs in this dimension and 
the bifurcated solution has symmetry of $\mathcal{P}_G q_b=q_b$.
\end{corollary}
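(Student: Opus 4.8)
The plan is to reduce the degenerate problem to the non-degenerate one by working entirely inside the $G$-invariant subspace, and then to invoke Theorem \ref{theoremProjectionOperator} to promote the resulting stationary point to a genuine solution. First I would introduce the invariant subspace $V_G=\{f:\mathcal{P}_G f=f\}$. By hypothesis the intersection of $V_G$ with the critical eigenspace of $\mathcal{H}$ (the space belonging to the eigenvalue $\kappa\to 0$) is one dimensional; call its unit generator $\phi_0=\phi(\theta_0)$, the single direction singled out by $\mathcal{P}_G\phi(\theta)=\phi(\theta)$. Thus, although the full critical eigenspace may be two dimensional, its $G$-symmetric part is a single ray, and the angular freedom $\theta$ is frozen at $\theta_0$ once we restrict to $V_G$. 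This is exactly the point where the group structure enters: the dimension of this invariant part is the multiplicity of the trivial representation of $G$ inside the critical eigenspace.

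Next I would argue that the Lyapunov--Schmidt reduction of \ref{secLyapunovSchmidt}, when started from $\phi_0$, never leaves $V_G$. The key fact is the $G$-invariance of the bracket integrals, $\braket{(g f_1)\dots(g f_n)}=\braket{f_1\dots f_n}$, which follows from $S[gq]=S[q]$ (see \ref{secInvarianceOfIntegrals}). Since $g\phi_0=\phi_0$, averaging over $g\in G$ gives $\braket{\phi_0^2\psi_\alpha}=\braket{\phi_0^2\,\mathcal{P}_G\psi_\alpha}$, so every coefficient coupling $\phi_0$ to an eigenfunction $\psi_\alpha$ in the complementary subspace $\mathcal{P}_G\psi_\alpha=0$ vanishes. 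The same cancellation should propagate through the recursion \eref{recEqForEpsilon}, forcing $\epsilon_\alpha=0$ for each non-invariant $\psi_\alpha$. Consequently the reduced action restricted to $V_G$ is a function of the single radial variable $r$ along $\phi_0$, with no surviving $\theta$ dependence.

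At this point the problem is exactly the non-degenerate case of \ref{secNonDegeneratedCase}, and I would apply that analysis verbatim. With $S_{LS}(r)=\frac{\kappa}{2}r^2+\sum_{n\ge 3}A_n(\theta_0)r^n/n!$, the stationary condition $S'_{LS}(r)=0$ has, besides $r=0$, a branch $r_b\to 0$ as $\kappa\to 0$ whenever the leading coefficient ($A_3(\theta_0)$, or $A_4(\theta_0)$ if $A_3(\theta_0)=0$, and so on) is nonzero. This yields a bifurcated function $q_b=q_o+r_b\phi_0+O(r_b^2)$ lying in $V_G$, so that $\mathcal{P}_G q_b=q_b$ holds by construction. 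Finally, because $q_b$ is a stationary point of the action inside the subspace $\mathcal{P}'_G=1$, Theorem \ref{theoremProjectionOperator} guarantees it is a stationary point of the full action, i.e.\ a genuine bifurcated solution of the equations of motion with the asserted symmetry.

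The main obstacle I expect is the second step: making rigorous the claim that the Lyapunov--Schmidt corrections cannot leave $V_G$. The averaging argument handles the lowest order directly, but one must verify that the entire recursion \eref{recEqForEpsilon} is $G$-equivariant so that the cancellation persists to all orders; equivalently, that the reduced action inherits the invariance of $S_{LS}(r,\theta)$ under the action of $G$ on the angle $\theta$, the inheritance property stated in the paper's final subsection. A secondary caveat, already flagged in \ref{secNonDegeneratedCase}, is the non-generic possibility that all $A_n(\theta_0)$ vanish; as there, I would simply exclude this case, since no symmetry is known that forces even $A_4(\theta_0)=0$.
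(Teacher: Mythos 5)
Your proposal is correct and follows essentially the same route as the paper: restrict to the $\mathcal{P}'_G=1$ subspace, where the hypothesis makes $\kappa$ non-degenerate, run the one-variable Lyapunov--Schmidt analysis of the non-degenerate case there, and invoke Theorem \ref{theoremProjectionOperator} to promote the subspace stationary point to a genuine solution with $\mathcal{P}_G q_b=q_b$. Your intermediate step --- using the averaging identity $\braket{\phi_0^2\psi_\alpha}=\braket{\phi_0^2\,\mathcal{P}_G\psi_\alpha}$ and the uniqueness of the recursion \eref{recEqForEpsilon} to show the full reduction's corrections in non-invariant directions vanish identically --- is correct but optional, since the paper simply defines the reduction within the invariant subspace from the outset and lets Theorem \ref{theoremProjectionOperator} handle all non-invariant directions (including the discarded $\phi(\theta_0+\pi/2)$) at once.
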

The statement
``the solution $\mathcal{P}_G\phi(\theta)=\phi(\theta)$ is one dimension''
means that
there are only two solutions $\theta=\theta_0$ and $\theta_0+\pi$
for this equation.
\begin{proof}
Since $\mathcal{P}_G\phi(\theta)=\phi(\theta)$ is one dimension,
the eigenvalue $\kappa$ is not degenerate
in  $\mathcal{P}'_G=1$ subspace.
Then, Lyapunov-Schmidt reduction yields 
one variable function $S_{LS}(r)$
for this subspace,
\begin{equation}
	S_{LS}(r)=S\left[q_o+ r\mathcal{P}_G\phi(\theta)
		+r\sum_{\mathcal{P}_G\psi_\alpha=\psi_\alpha}
			\epsilon_\alpha \psi_\alpha
			\right].
\end{equation}
As shown in \sref{secNonDegeneratedCase},
a non-trivial stationary point $q_b$ exists in this subspace.
Then, by the theorem \ref{theoremProjectionOperator},
$q_b$  is a bifurcated solution.
Obviously, $\mathcal{P}_G q_b=q_b$ follows.
\end{proof}

For degenerate case,
the projection operator $\mathcal{P}_G$ picks up
one direction $\theta_0$ by
$\mathcal{P}_G\phi(\theta_0)=\phi(\theta_0)$.
This is the answer to the question in subsection
\ref{secDegeneratedCase}.
Examples will be shown in the following subsections.


\subsection{Inheritance of symmetry}
\label{secInheritance}
The invariance of $q_o$ and action $S$
under the symmetry group $G$
is inherited by integrals  $\braket{\dots}$ and reduced action $S_{LS}$.
Here we list two useful equalities.

\begin{lemma}
\label{lemmaInvariance}
For any element $g$ of a symmetry group $G$ for $q_o$ and the action $S$,
the following equality holds,
\begin{equation}
\label{invarianceOfIntegrals}
\braket{e_i\vphantom{)}^{n_i}e_j\vphantom{)}^{n_j}\dots}
=\braket{(ge_i)^{n_i}(g e_j)^{n_j}\dots},
\end{equation}
where $e_i$ are any eigenfunctions of $\mathcal{H}$,
and $n_i$ are any natural numbers.
\end{lemma}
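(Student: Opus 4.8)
The plan is to derive the stated invariance directly from the two defining properties of a symmetry group, $g q_o = q_o$ and $S[gq]=S[q]$, by Taylor-expanding the action about $q_o$ and matching terms order by order. The key observation is that, for $n\ge 3$, the abbreviated bracket $\braket{e_{i_1}e_{i_2}\cdots e_{i_n}}$ is nothing but the $n$-th variation $\delta^n S[q_o]$ evaluated on the directions $e_{i_1},\dots,e_{i_n}$, since the kinetic part of the action is quadratic and contributes nothing beyond second order; indeed the excerpt records $\delta^n S[q]=\braket{(\delta q)^n}$ precisely for $n\ge 3$. Thus it suffices to show that each such $\delta^n S[q_o]$, viewed as a symmetric multilinear form on variations, is invariant under the simultaneous replacement of every argument by its image under $g$.

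First I would record the single functional identity on which everything rests. Because $g$ is linear and $g q_o = q_o$, for any periodic variation $\delta q$ one has $g(q_o+\delta q)=q_o+g\,\delta q$, and hence by $S[gq]=S[q]$,
\begin{equation}
S[q_o+\delta q]=S[g(q_o+\delta q)]=S[q_o+g\,\delta q].
\end{equation}
Both sides are now expanded as power series in the variation. Since the identity holds for every $\delta q$, matching the homogeneous order-$n$ parts and using $\delta^n S[q_o]=\braket{(\delta q)^n}$ for $n\ge 3$ gives at once
\begin{equation}
\Braket{(\delta q)^n}=\Braket{(g\,\delta q)^n}
\mbox{ for all } n\ge 3.
\end{equation}

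To pass from this to the multilinear statement with arbitrary eigenfunctions and exponents, I would polarize. Choosing $\delta q=\sum_k t_k e_k$ with real parameters $t_k$ ranging over the eigenfunctions appearing in the bracket, the left-hand side $\braket{(\sum_k t_k e_k)^n}$ is a homogeneous polynomial of degree $n$ in the $t_k$, whose coefficient of the monomial $\prod_k t_k^{\,n_k}$ is, by multilinearity and symmetry of the bracket, a fixed multinomial multiple of $\braket{\prod_k e_k^{\,n_k}}$. On the right-hand side $g\,\delta q=\sum_k t_k\,(g e_k)$, so the corresponding coefficient is the same multiple of $\braket{\prod_k (g e_k)^{\,n_k}}$. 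Equality of the two polynomials in the $t_k$ forces equality of their coefficients, which is exactly \eref{invarianceOfIntegrals}.

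The only delicate point is the interface between the purely potential definition of $\braket{\cdots}$ and the full variation $\delta^n S$. For $n\ge 3$ there is no issue, since the kinetic term drops out and $\braket{(\delta q)^n}=\delta^n S[q_o]$ identically, so the argument applies verbatim. For the quadratic case the bracket retains only the potential piece of $\delta^2 S$; there I would simply note that each generator of $D_n$ (time shift, time inversion, spatial inversion, and exchange of equal-mass particles) separately leaves the kinetic and the potential integrals invariant, so the same expand-and-match argument applies to the potential functional alone, in accord with the second-order invariance already used in \ref{secInvarianceOfIntegrals}. I expect this bookkeeping --- keeping the kinetic and potential contributions cleanly separated and tracking the multinomial factors in the polarization --- to be the only real work; there is no analytic obstacle here, merely combinatorial care.
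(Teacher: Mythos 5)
Your proposal is correct and follows essentially the same route as the paper's own proof: both start from the identity $S[q_o+\delta q]=S[q_o+g\,\delta q]$, expand the action about $q_o$, match homogeneous orders in the arbitrary variation, and then polarize with $\delta q=\sum_k t_k e_k$ to extract \eref{invarianceOfIntegrals} from the monomial coefficients. Your extra remark on the degree-two brackets (separating kinetic and potential invariance) is a minor refinement of a point the paper handles only via the full Hessian form, but it does not change the argument.
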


\begin{theorem}
\label{theoryInvarianceOfSLS}
For any element $g$ of a symmetry group $G$ for $q_o$ and the action $S$,
the reduced action $S_{LS}(r,\theta)$ inherits the invariance of the action.
Namely
\begin{equation}
\label{invarianceSLS}
\mbox{if } g\, r\phi(\theta)=r'\phi(\theta'),
\mbox{ then }
S_{LS}(r,\theta)=S_{LS}(r',\theta').
\end{equation}
\end{theorem}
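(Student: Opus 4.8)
The plan is to transport the invariance $S[gq]=S[q]$ through every stage of the construction of $S_{LS}$, exploiting that $g$ fixes $q_o$. First I would observe that, since $gq_o=q_o$ and $S[gh]=S[h]$ for every periodic $h$, the pre-reduction function satisfies
\begin{equation}
S(r,\theta,\epsilon)=S[q_o+\delta q]-S[q_o]=S[g(q_o+\delta q)]-S[q_o]=S[q_o+g\,\delta q]-S[q_o],
\end{equation}
where $\delta q=r\phi(\theta)+\sum_\alpha r\epsilon_\alpha\psi_\alpha$. The task is therefore to rewrite $g\,\delta q$ in the same normal form, that is, as $r'\phi(\theta')+\sum_\beta r'\epsilon'_\beta\psi_\beta$, and to read off how $\epsilon'$ depends on $\epsilon$.

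Second, I would establish that $g$ commutes with the Hessian $\mathcal{H}$ and hence permutes its eigenspaces. This follows from the same invariance: differentiating $S[q_o+sh]=S[q_o+s\,gh]$ twice in $s$ at $s=0$ gives $\braket{h\,\mathcal{H}h}=\braket{(gh)\,\mathcal{H}(gh)}$ for all $h$, so $g^{-1}\mathcal{H}g=\mathcal{H}$ because $g$ preserves the $L^2$ inner product, being built from time shift, time inversion, spatial inversion and particle exchange. Consequently $g$ maps the degenerate $\kappa$-eigenspace $\mathrm{span}\{\phi_1,\phi_2\}$ orthogonally into itself, which is exactly the hypothesis $g\,r\phi(\theta)=r'\phi(\theta')$ with $|r'|=|r|$; and it acts on the complementary eigenfunctions by an orthogonal matrix that respects each $\lambda_\alpha$, namely $g\psi_\alpha=\sum_\beta M_{\beta\alpha}\psi_\beta$. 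Collecting terms gives $g\,\delta q=r'\phi(\theta')+\sum_\beta r'\epsilon'_\beta\psi_\beta$ with $\epsilon'_\beta=(r/r')\sum_\alpha M_{\beta\alpha}\epsilon_\alpha$, a linear invertible change $\epsilon'=\mathcal{T}\epsilon$ of the $\epsilon$-variables. This yields the identity of functions $S(r,\theta,\epsilon)=S(r',\theta',\epsilon')$. The same identity may also be read off directly from \eref{invarianceOfIntegrals} of Lemma \ref{lemmaInvariance}, applied term by term to the expansion of the action in the integrals $\braket{\dots}$.

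The last step is to pass this identity through the Lyapunov--Schmidt reduction of \sref{secLyapunovSchmidt}. Because $\mathcal{T}$ is linear and invertible and $S(r,\theta,\cdot)=S(r',\theta',\mathcal{T}\,\cdot)$, the stationarity condition $\partial_\epsilon S(r,\theta,\epsilon)=0$ is equivalent to $\partial_{\epsilon'}S(r',\theta',\mathcal{T}\epsilon)=0$. Since the solution of the recursive equation \eref{recEqForEpsilon} is unique as a power series in $r$, the two unique stationary points must correspond, $\epsilon'_\ast(r',\theta')=\mathcal{T}\,\epsilon_\ast(r,\theta)$. Substituting back,
\begin{equation}
S_{LS}(r,\theta)=S(r,\theta,\epsilon_\ast(r,\theta))=S(r',\theta',\epsilon'_\ast(r',\theta'))=S_{LS}(r',\theta'),
\end{equation}
which is the claim. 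I expect the main obstacle to be precisely the bookkeeping of this uniqueness argument: one must verify that the substitution $\mathcal{T}$ genuinely carries the order-by-order solution of \eref{recEqForEpsilon} at $(r,\theta)$ to the one at $(r',\theta')$, rather than merely interchanging stationary points abstractly. Once the uniqueness of the reduced $\epsilon_\ast$ is seen to be respected by $\mathcal{T}$, the conclusion is immediate.
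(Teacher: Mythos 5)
Your proof is correct and follows essentially the same route as the paper: apply $g$ inside the action using $g q_o=q_o$ and $S[gq]=S[q]$, express the action of $g$ on the eigenfunctions $\psi_\alpha$ by an orthogonal (block) matrix, and invoke the uniqueness of the power-series solution of \eref{recEqForEpsilon} to identify the transported coefficients $\mathcal{T}\epsilon_\ast(r,\theta)$ with $\epsilon_\ast(r',\theta')$. The paper's proof rests on exactly this uniqueness argument, and then additionally carries out the order-by-order bookkeeping you flag as the main obstacle, verifying directly via Lemma \ref{lemmaInvariance} that $r\tilde{g}_\alpha\epsilon_\alpha(r,\theta)$ satisfies the defining recursion \eref{recEqForEpsilon} for $r'\phi(\theta')$.
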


In this theorem, $r'$ is one of $\pm r$.
For one dimensional case,
the theorem means that ``if $g\phi=-\phi$ then $S_{LS}(r)=S_{LS}(-r)$''.
For two dimensional case, this theorem may mean that
``if $g\phi(\theta)=\phi(\theta')$ then $S_{LS}(r,\theta)=S_{LS}(r,\theta')$''.
However,
when $g\phi(\theta)=-\phi(\theta)$,
it is convenient to allow $r<0$ and 
read this theorem
``if $g\phi(\theta)=-\phi(\theta)$ then $S_{LS}(r,\theta)=S_{LS}(-r,\theta)$''.
Since $(r,\theta)$ and $(-r, \theta+\pi)$ represent
the same function $r(\cos(\theta)\phi_1+\sin(\theta)\phi_2)$, the following identity is satisfied:
\begin{equation}
S_{LS}(r,\theta)=S_{LS}(-r,\theta+\pi).
\end{equation}


Proofs are given in \ref{secInvarianceOfIntegrals},
because the meaning of these equalities are clear while  proofs are long.


\section{Bifurcations  of $D_6$}
\label{bifurcationOfD6}
Since 
a
figure-eight solution is invariant under the $D_6$ transformations
that is defined by \eref{eqDefD6},
the Hessian $\mathcal{H}$ is also invariant under $D_6$.
Then the eigenvalues and eigenfunctions are classified by 
irreducible representations of $D_6$.
It is well known that group $D_6$ has six irreducible representations,
4 one-dimensional representations and 2 two-dimensional ones.
Each representation is specified by the eigenvalues
$\mathcal{P}_C',\, \mathcal{M}'$ and $\mathcal{S}'$ of operators
$\mathcal{P}_C,\, \mathcal{M}$ and $\mathcal{S}$.
Since $\mathcal{P_C}^2=\mathcal{P_C}$ and $\mathcal{M}^2=\mathcal{S}^2=1$,
the eigenvalues are $\mathcal{P_C}'=1$ or $0$, $\mathcal{M}'=\pm 1$
and $\mathcal{S}'=\pm 1$.
The original solution $q_o$ has 
$\mathcal{P_C}'=\mathcal{M}'=\mathcal{S}'=1$ by definition.
Table \ref{table6RepsOfD6} shows the six irreducible representations.
In the following sections \ref{subTypeI} to \ref{subTypeVI},
bifurcation patterns for each irreducible representation
will be described.
The results 
are summarized in the table \ref{table6RepsOfD6} and 
the figure \ref{figbifurcationsAndSymmetryBreaking}.

In this section,
we treat bifurcations of $D_6$.
Therefore, a symmetry group $G$ is always a subgroup of $D_6$.
The condition $g q_o=q_o$ and $S[g q]=S[q]$ is always satisfied by $g \in G$.
Moreover, for an eigenfunction $e$,
a symmetry for $\{e,q_o, S\}$ represents
a group element $g$ of $G$ with $g e=e$, $g q_o=q_o$ and $S[g q]=S[q]$.

\begin{table}
\caption{\label{table6RepsOfD6}
Six irreducible representations of group $D_6$
characterized by $\mathcal{P_C}'$, $\mathcal{M}'$, and $\mathcal{S}'$.
The column $d$ represents dimension for the representation.
The last 
four
columns will be shown in each subsections
\ref{subTypeI} to \ref{subTypeVI}.
Columns $\mathcal{P}$ and $G$ represent 
the projection operator $\mathcal{P}$ and symmetry group $G$ for the bifurcated solution, respectively.
The ``Order'' represents order $n$ of the bifurcation: 
$\kappa=-A_{n+2}\, r^n/(n+1)!+O(r^{n+1}),\, A_{n+2}\ne0$.
Column ``Type'' represents type of bifurcation.}
\footnotesize
\begin{tabular}{c|rrr|c|llcl}
\br
Representation & $\mathcal{P_C}'$ & $\mathcal{M}'$ & $\mathcal{S}'$ & $d$ 
& $\mathcal{P}$ &$G$&Order&Type\\
\hline
I & 1 & 1 & 1 & 1 & $\mathcal{P_C}\mathcal{P_M}\mathcal{P_S}$&$D_6$
&1&fold$^{\rm b}$\\
II & 1 & 1 & $-1$ & 1 & $\mathcal{P_C}\mathcal{P_M}$&$C_6$
&2&one-side \\
III & 1 & $-1$ & 1 & 1 & $\mathcal{P_C}\mathcal{P_S}$&$D_3$
&2&one-side \\
IV & 1 & $-1$ & $-1$ & 1 & $\mathcal{P_C}\mathcal{P_{MS}}$ &$D_3'$
&2&one-side\\
\hline
V & 0 & 1 & $\pm 1$ & 2& $\mathcal{P_M}\mathcal{P_S}$
&$D_2$ &1&both-sides\\
VI & 0 & $-1$ & $\pm 1$ & 2 
	& $\mathcal{P_S}$ or$^{\rm a}$ $\mathcal{P_{MS}}$
	&$D_1$ or$^{\rm a}$ $D_1'$
	&2&double$^{\rm a}$ one-side\\
\br
\end{tabular}

$^{\rm a}$In the representation VI,
bifurcation yields two different kind of bifurcated solutions:
$\mathcal{P_S}$ invariant solution
and $\mathcal{P_{MS}}$ invariant one.\\
$^{\rm b}$Fold bifurcation is suitable, although both-sides is still possible. 
\end{table}

\begin{figure}
   \centering
\includegraphics[width=13cm]{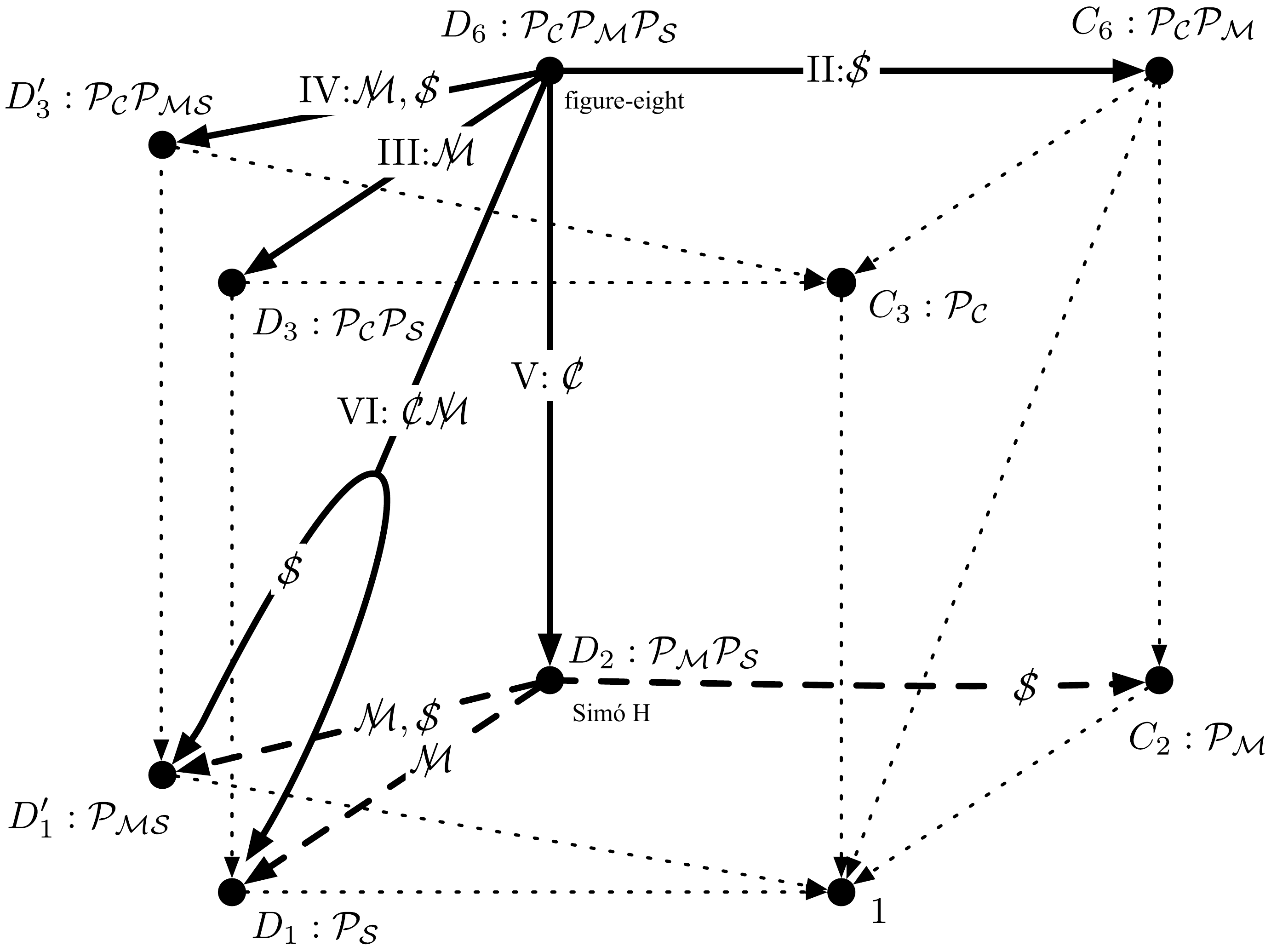}
   \caption{Bifurcations and symmetry breaking of
   $D_6$ and $D_2$.
   Bifurcations of $D_6$ and $D_2$ are represented by
   thick arrows and dashed thick arrows respectively.
   Each vertex represents the solution of the equations of motion.
   The symbol ``$G:\mathcal{P}$'' at each vertex represents
   symmetry group $G$ and projection operator $\mathcal{P}$ for the solution.
   The symbol ``No:$\slashed{\mathcal{O}}$'' on the arrows represents
   the  number of irreducible representation in  table \ref{table6RepsOfD6}
   and broken symmetry $\mathcal{O}$.
   The fork in VI shows that
   this bifurcation yields two bifurcated solutions,
   one with symmetry group $D_1$ and the other with $D'_1$.
   The bifurcation I in  table \ref{table6RepsOfD6}
    is bifurcation of $D_6 \to D_6$
    which is omitted in this figure.
   }
   \label{figbifurcationsAndSymmetryBreaking}
\end{figure}

\subsection{Representation I: $\mathcal{P_C}'=\mathcal{M}'=\mathcal{S}'=1$}
\label{subTypeI}
Irreducible representation I is 
characterized by $\mathcal{P_C}'=\mathcal{M}'=\mathcal{S}'=1$,
namely,
eigenfunction $\phi$
has these eigenvalues that is the same as $q_o$.
The representation for group elements are $\tilde\mathcal{B}=\tilde\mathcal{S}=1$.
This representation is called identity representation or 
trivial representation.

Since all  $\mathcal{C},\, \mathcal{M}, \mathcal{S}$ are 
the symmetry for $\{\phi, q_o,S\}$,
the projection operator is
\begin{equation}
\mathcal{P}_{I}
=\mathcal{P_C}\mathcal{P_M}\mathcal{P_S}.
\end{equation}
Then by the corollary \ref{corollaryProjectionOperator}, 
a bifurcation occurs
and the bifurcated solution $q_b$ has the same invariance $D_6$
as the original $q_o$.
The bifurcation pattern by this representation is
\begin{equation}
D_6 \to D_6.
\end{equation}

Since there is no reason for $\braket{\phi^3}=0$,
we can safely assume $A_3=\braket{\phi^3}\ne 0$.
Then order 1 bifurcation described
in \eref{eqFirstOrderRb}, \eref{eqFirstOrderSLS}
and \eref{eqFirstOrderSLSSecond} occurs.

As stated in \sref{secNonDegeneratedCase},
order $1$ bifurcation can describe 
a fold bifurcation or ``both-sides'' bifurcation.
Since symmetry group for $q_o$ and $q_b$ is the same
in this bifurcation,
a fold bifurcation is suitable.

\subsection{Representation II: $\mathcal{P_C}'=\mathcal{M}'=1$ and $\mathcal{S}'=-1$}
\label{subTypeII}
Irreducible representation  II is 
characterized by $\mathcal{P_C}'=\mathcal{M}'=1$ and $\mathcal{S}'=-1$,
namely, $\phi$
has these eigenvalues.
In this representation $\tilde\mathcal{B}=1$, and $\tilde\mathcal{S}=-1$.
Since the symmetry for $\{\phi, q_o,S\}$ are $\mathcal{C}$ and $\mathcal{M}$,
the projection operator is
\begin{equation}
\mathcal{P}_{II}=\mathcal{P_C}\mathcal{P_M}.
\end{equation}
By the corollary \ref{corollaryProjectionOperator},
a bifurcation occurs and the invariance of the bifurcated solution is 
$\mathcal{P}_{II}q_b=q_b$.
Namely, the invariance of $q_b$ is
\begin{equation}
C_6=\{1,\mathcal{B},\mathcal{B}^2,\dots,\mathcal{B}^5\}.
\end{equation}
The symmetry for $\mathcal{S}$ is broken.
Bifurcation pattern by this representation is
\begin{equation}
D_6 \to C_6.
\end{equation}

By $\mathcal{S}\phi=-\phi$
and the theorem \ref{theoryInvarianceOfSLS},
the reduced action is even function of $r$:
$S_{LS}(-r)=S_{LS}(r)$.
Therefore, $A_{2 n+1}=0$ for $n \ge 1$.
There is no reason for $A_4=0$, we assume $A_4\ne 0$.
Therefore  order 2 bifurcation
described by \eref{order2Rb}, \eref{order2SLS} and \eref{order2S''LS}
occurs.

\subsection{Representation III: $\mathcal{P_C}'=1,\, \mathcal{M}'=-1,\, \mathcal{S}'=1$}
\label{subTypeIII}
Irreducible representation   III is 
characterized by $\mathcal{P_C}'=\mathcal{S}'=1$ and $\mathcal{M}'=-1$.
In this representation $\tilde\mathcal{B}=-1$ and $\tilde\mathcal{S}=1$.
Since the symmetry for $\{\phi, q_o,S\}$
are $\mathcal{C}$ and $\mathcal{S}$,
the projection operator  is
\begin{equation}
\mathcal{P}_{III}=\mathcal{P_C}\mathcal{P_S}.
\end{equation}
By the corollary \ref{corollaryProjectionOperator},
a bifurcation occurs and
the invariance of $q_b$ is
\begin{equation}
\label{D3}
D_3=\{1,\mathcal{C},\mathcal{C}^2,\mathcal{S},\mathcal{S}\mathcal{C},\mathcal{S}\mathcal{C}^2\}.
\end{equation}
The symmetry for $\mathcal{M}$ is broken.
The bifurcation pattern is
\begin{equation}
D_6 \to D_3.
\end{equation}

Since $\mathcal{M}\phi=-\phi$, $S_{LS}(-r)=S_{LS}(r)$
and $A_{2 n+1}=0$ for $n \ge 1$.
Assuming $A_4\ne 0$,
order 2 bifurcation
described by \eref{order2Rb}, \eref{order2SLS} and \eref{order2S''LS}
occurs.

\subsection{Representation IV: $\mathcal{P_C}'=1,\, \mathcal{M}'=\mathcal{S}'=-1$}
\label{subTypeIV}
Irreducible representation IV is 
characterized by $\mathcal{P_C}'=1,\, \mathcal{M}'=\mathcal{S}'=-1$.
In this representation $\tilde\mathcal{B}=\tilde\mathcal{S}=-1$.
Since the symmetry for $\{\phi, q_o,S\}$
are $\mathcal{C}$ and $\mathcal{MS}$,
the projection operator is
\begin{equation}
\mathcal{P}_{IV}=\mathcal{P_C}\mathcal{P_{MS}}.
\end{equation}
By the corollary \ref{corollaryProjectionOperator},
a bifurcation occurs and  invariance of $q_b$ is
\begin{equation}
D_3'=\{1,\mathcal{C},\mathcal{C}^2,
	\mathcal{MS},\mathcal{MS}\mathcal{C},\mathcal{MS}\mathcal{C}^2\}.
\end{equation}
Here, $'$ is added to distinguish this dihedral group of 6 elements
from $D_3$ in \eref{D3}.
The symmetry of both $\mathcal{M}$ and $\mathcal{S}$ are broken,
while the symmetry of $\mathcal{MS}$ is unbroken.

Since $\mathcal{M}\phi=\mathcal{S}\phi=-\phi$, 
$S_{LS}(-r)=S_{LS}(r)$ and $A_{2 n+1}=0$ for $n \ge 1$.
Assuming $A_4\ne 0$,
order 2 bifurcation
described by \eref{order2Rb}, \eref{order2SLS} and \eref{order2S''LS}
occurs.

\subsection{Representation V: $\mathcal{P_C}'=0,\, \mathcal{M}'=1$ and $\mathcal{S}'=\pm1$}
\label{subTypeV}
Irreducible representation V is 
characterized by $\mathcal{P_C}'=0,\, \mathcal{M}'=1$ and $\mathcal{S}'=\pm1$.
This is two  dimensional representation with
\begin{equation}
\tilde{\mathcal{B}}=\left(
	\begin{array}{rr}
		\cos(2\pi/3) & -\sin(2\pi/3) \\
		\sin(2\pi/3) & \cos(2\pi/3)
\end{array}
\right)
\mbox{ and }
\tilde{\mathcal{S}}=\left(
	\begin{array}{cc}
		1 & 0 \\
		0 & -1
\end{array}
\right).
\end{equation}
Eigenvalue $\kappa$ has degeneracy $d=2$.
Let the eigenfunction $\phi_\pm$ be the eigenfunction
of $\mathcal{S}\phi_\pm=\pm \phi_\pm$.
They have $\mathcal{P_C}'=0$ and  $\mathcal{M}'=1$.

Since the symmetry for $\{\phi_+,q_o,S\}$ are
$\mathcal{M}$ and $\mathcal{S}$,
the projection operator for $\phi_+$ is
\begin{equation}
\label{PV}
\mathcal{P}_{V+}=\mathcal{P_M}\mathcal{P_S}.
\end{equation}
Note that this projection operator chooses $\phi_+$
and discards  $\phi_-$:
$\mathcal{P}_{V+}\phi_+=\phi_+$, $\mathcal{P}_{V+}\phi_-=0$.
Therefore
by the corollary \ref{corollaryProjectionOperator},
a bifurcation in subspace of $\mathcal{P}_{V+}=\mathcal{P_M}\mathcal{P_S}$ occurs.
The broken symmetry is $\mathcal{C}$ and 
invariance of the bifurcated solution is
\begin{equation}
D_2=\{1,\mathcal{M},\mathcal{S},\mathcal{SM}\}.
\end{equation}
The bifurcation pattern is
\begin{equation}
D_6 \to D_2.
\end{equation}

On the other hand,
the symmetry for $\{\phi_-, q_o,S\}$ is
only $\mathcal{M}$.
Therefore, the projection operator is $\mathcal{P_M}$.
Since $\mathcal{P_M}$ does not exclude $\phi_+$,
subspace of $\mathcal{P_M}$ remains two-dimensional. 
So the corollary \ref{corollaryProjectionOperator} doesn't ensure
a bifurcation in  the direction of $\phi_-$.
Indeed we can show that there is bifurcated solution in $\phi_+$ direction,
whereas no bifurcated solution in $\phi_-$ direction,
by explicitly calculating the reduced action $S_{LS}$.

Let $\phi(\theta)=\cos(\theta)\phi_++\sin(\theta)\phi_-$,
then as shown in \ref{AkForV},
the the reduced action is given by
\begin{eqnarray}
S_{LS}(r,\theta)
	=\frac{\kappa}{2}r^2
		+\frac{A_3(0)}{3!}r^3\cos(3\theta)
		+\frac{A_4(0)}{4!}r^4
		+O(r^5),\label{SLSforV}\\
\frac{A_3(0)}{3!}=\frac{1}{3!}\braket{\phi_+^3},\label{A3V}\\
\frac{A_4(0)}{4!}=
	\frac{1}{4!}\braket{\phi_+^4}
		-\frac{1}{(2!)^3}\sum_\alpha \frac{\braket{\phi_+^2\psi_{\alpha+}}^2}{\lambda_\alpha}
		\label{A4V},
\end{eqnarray}
where $\mathcal{S}\psi_{\alpha+}=\psi_{\alpha+}$.
The reduced action $S_{LS}$ in $(x,y)$ plane is shown in \fref{figSLSforV},
where $(x,y)$ are orthogonal coordinates 
defined by $x=r\cos(\theta)$ and $y=r\sin(\theta)$ as usual.
Then $r\phi(\theta)=x\phi_++y\phi_-$.
\begin{figure}
   \centering
\includegraphics[width=4cm]{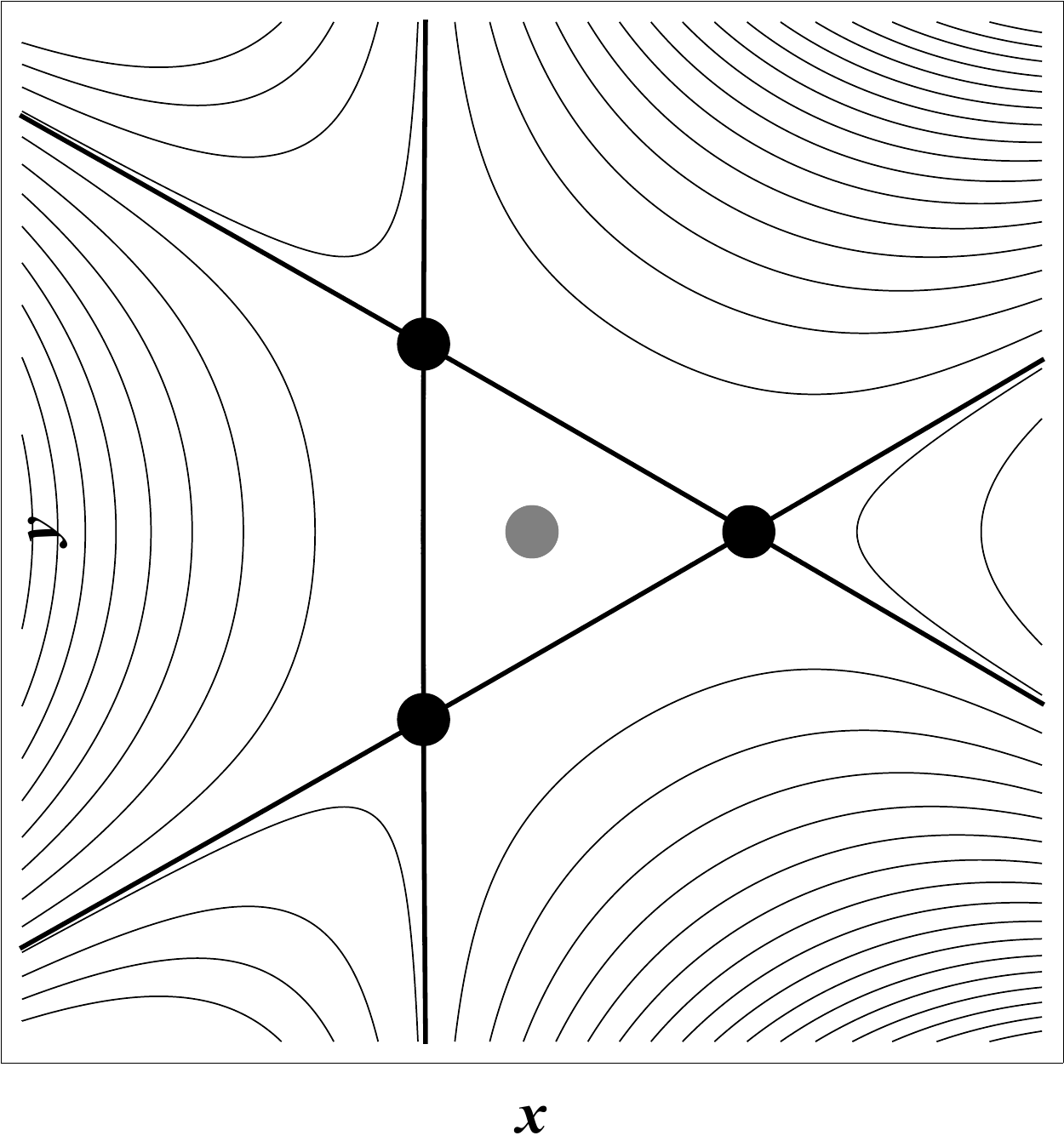}
\includegraphics[width=4cm]{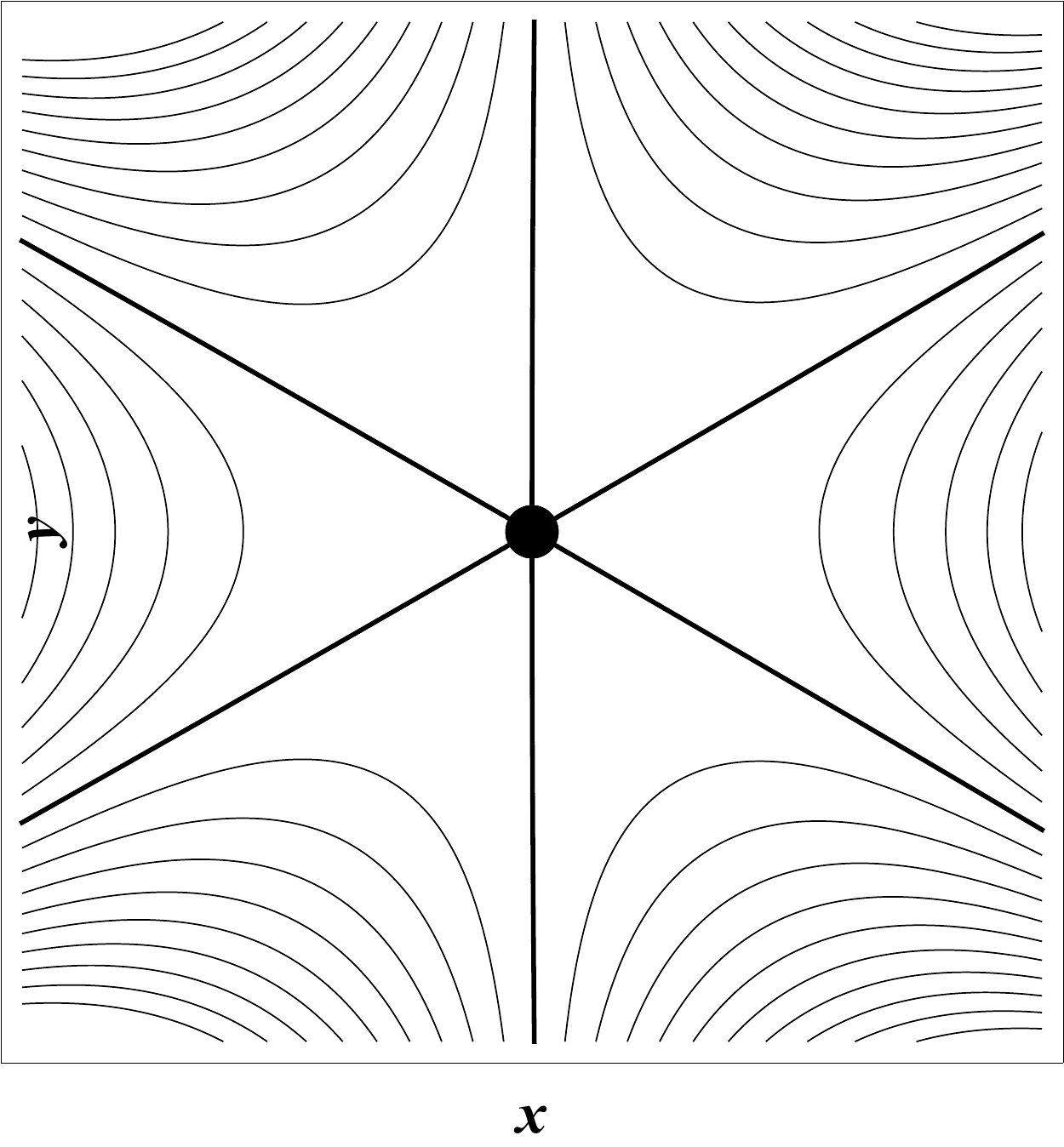}
\includegraphics[width=4cm]{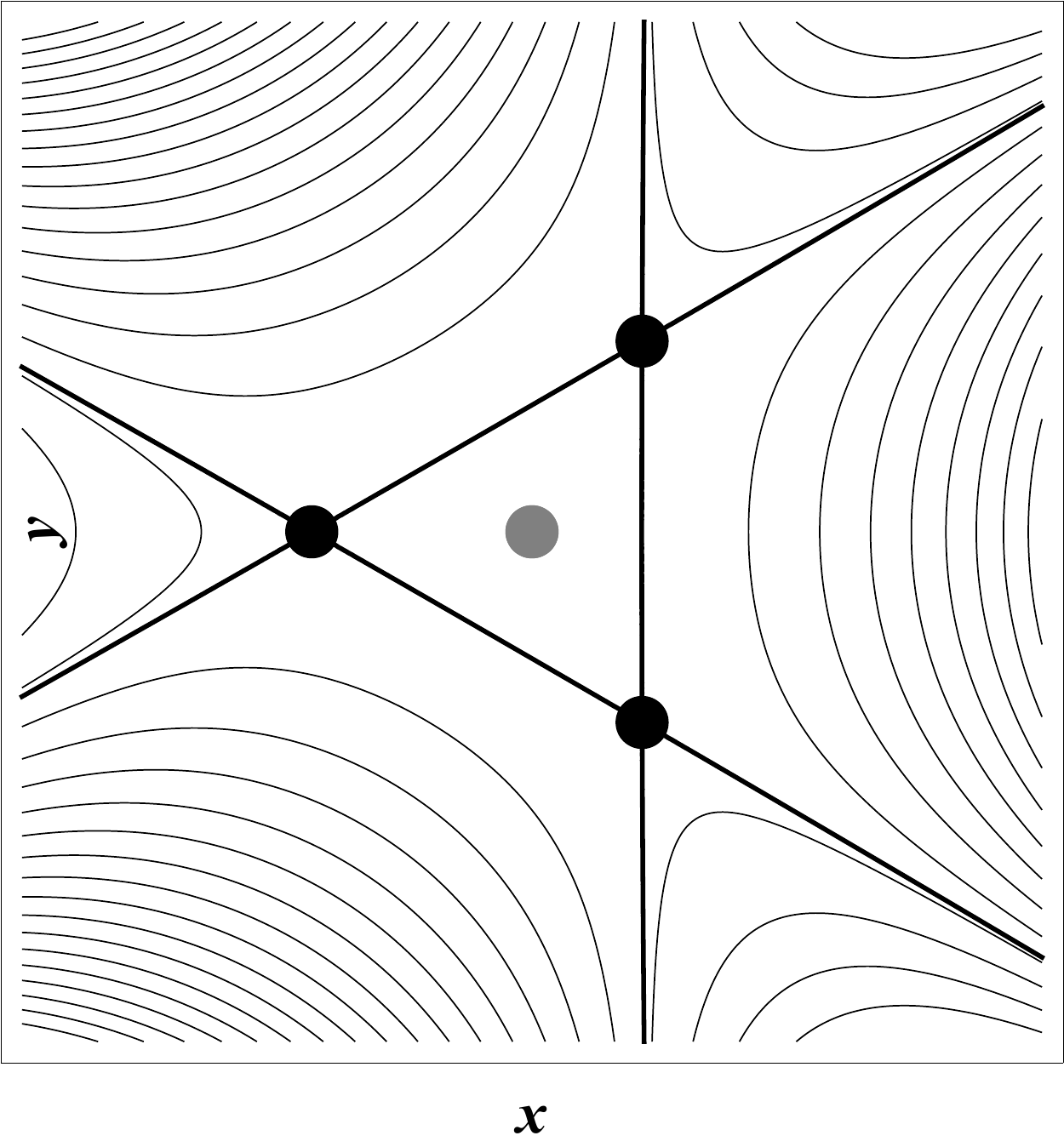}\\[1mm]
\includegraphics[width=4cm]{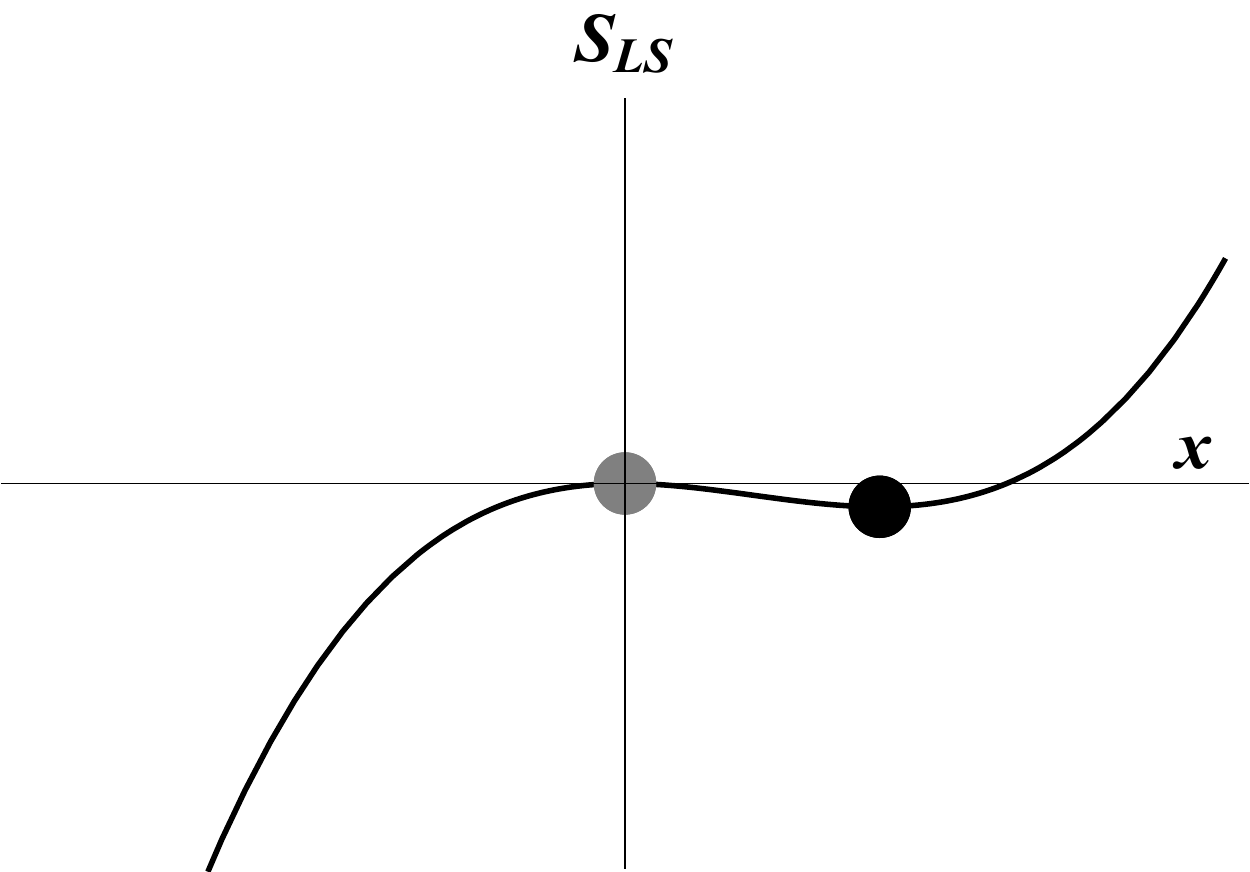}
\includegraphics[width=4cm]{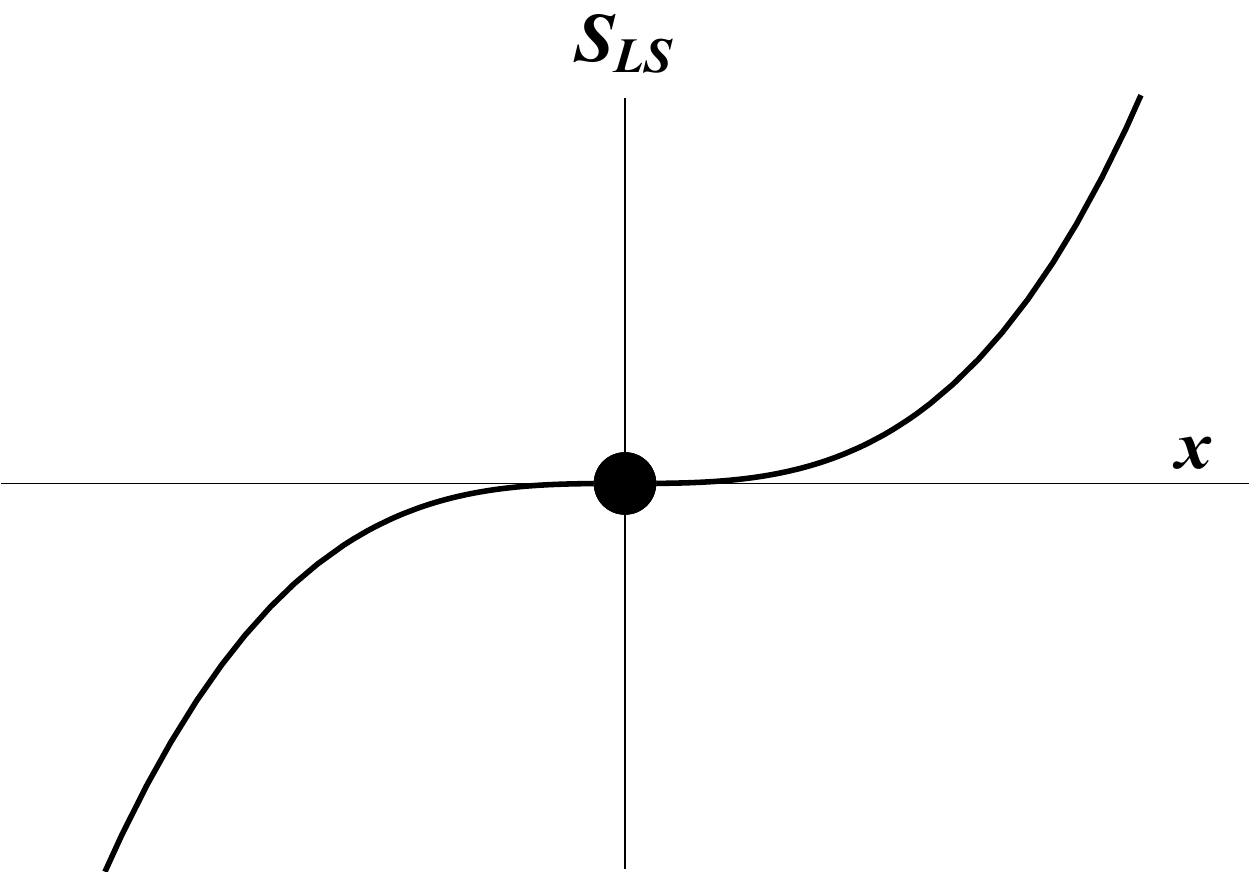}
\includegraphics[width=4cm]{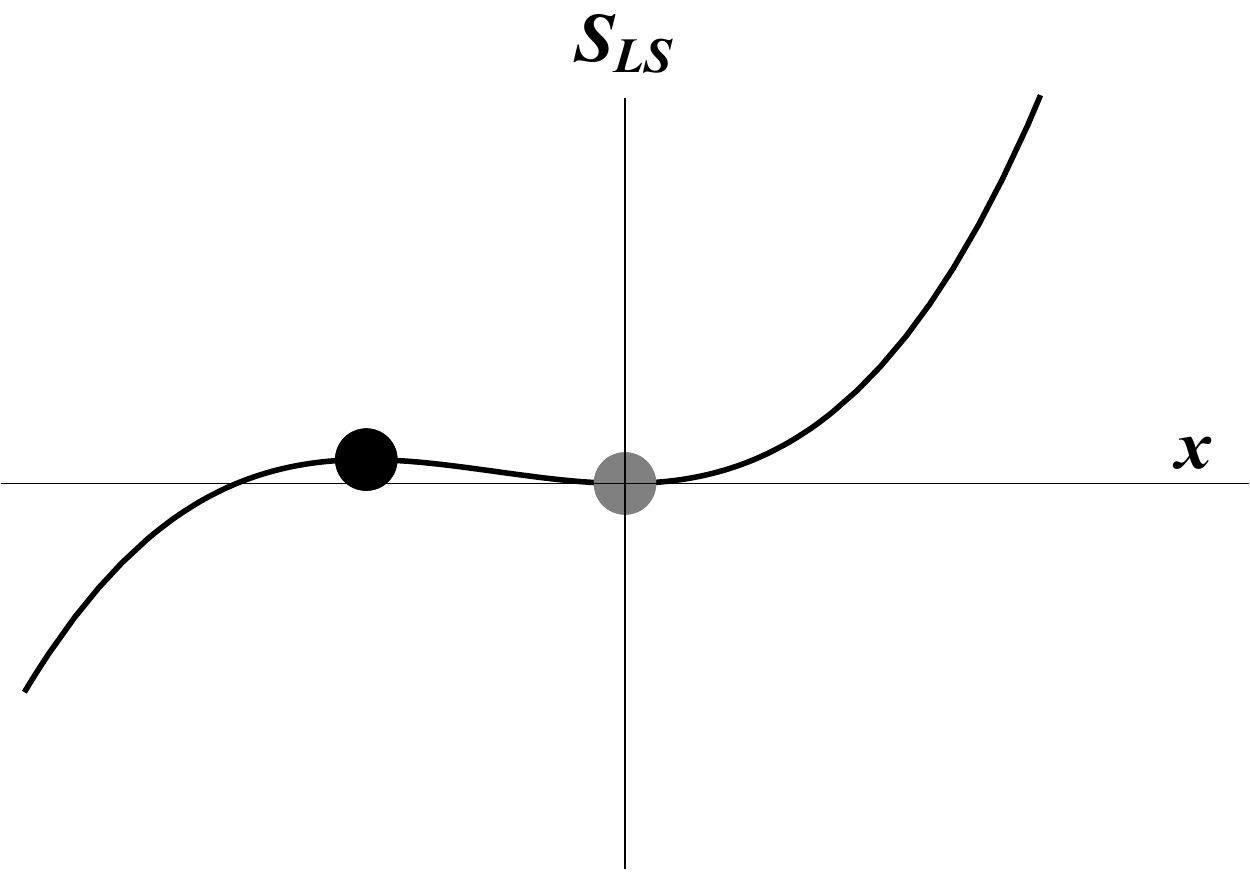}
   \caption{Reduced action $S_{LS}$
   of representation V for $A_3, A_4>0$.
   Upper: Contour plot of $S_{LS}$ 
   in orthogonal coordinates $(x,y)$.
   Lower: $S_{LS}$ for $y=0$.
   From left to right $\kappa<0, \kappa=0$ and $\kappa>0$.
   Gray circle and black circles represent $q_o$
   and $q_b, \mathcal{C}q_b, \mathcal{C}^2 q_b$
   respectively.
   For sufficiently small  $\kappa$ and short range of $r$,
   terms of $O(r^4)$ have no effect.
   }
   \label{figSLSforV}
\end{figure}

Note that $S_{LS}$ has the symmetry of regular triangle $D_3$,
instead of $D_6$.
The reason is $\mathcal{M}\phi(\theta)=\phi(\theta)$.
By this invariance, the theorem \ref{theoryInvarianceOfSLS}
gives an identity: $S_{LS}(r,\theta)=S_{LS}(r,\theta)$.
Therefore, $\mathcal{M}$ invariance is invisible in $S_{LS}$.
On the other hand,
$\mathcal{S}\phi(\theta)=\phi(-\theta)$,
$\mathcal{C}\phi(\theta)=\phi(\theta+2\pi/3)$
and  the theorem \ref{theoryInvarianceOfSLS}
give
$S_{LS}(r,-\theta)=S_{LS}(r,\theta)$ and
$S_{LS}(r,\theta\pm 2\pi/3)=S_{LS}(r,\theta)$
that show apparent invariance of $S_{LS}$ in $D_3$.

In other words, 
since quotient groups of $D_6$ and $D_2$ by
the centre $\{1,\mathcal{M}\}$ are
\begin{eqnarray}
D_6/\{1,\mathcal{M}\}
\cong \{1,\mathcal{C}, \mathcal{C}^2,\mathcal{S},
		\mathcal{S}\mathcal{C}, \mathcal{S}\mathcal{C}^2\}
=D_3\\
D_2/\{1,\mathcal{M}\}
\cong \{1,\mathcal{S}\}
=D_1,
\end{eqnarray}
this bifurcation pattern $D_6 \to D_2$ that keeps $\mathcal{M}$ symmetry
is equivalent to the bifurcation
\begin{equation}
D_3 \to D_1.
\end{equation}
This is the reason why
the reduced action $S_{LS}$ has $D_3$ symmetry.
The $D_3$ symmetry determines the form of $S_{LS}$
in \eref{SLSforV}.

The equations for stationary points are
\begin{eqnarray}
\partial_r S_{LS}(r,\theta)
	=r\left(
		\kappa+\frac{A_3(0)}{2}r\cos(3\theta)
		+\frac{A_4(0)}{6}r^2+O(r^3)
	\right)=0,\\
\partial_\theta S_{LS}
	=-\frac{A_3(0)}{2}r^3 \sin(3\theta)+O(r^5)=0.\label{dThetaSV}
\end{eqnarray}
The solutions are $r_o=0$ and
\begin{equation}
\label{rbV}
r_b=-\frac{2}{A_3(0)}\kappa-\frac{4A_4(0)}{3 A_3(0)^3}\kappa^2
		+O(\kappa^3),\,
\theta_k=\frac{2\pi}{3}k,\, k=0,1,2.
\end{equation}
This is order 1 bifurcation.
Let $q_b$ be the solution at $\theta=0$.
Then
\begin{equation}
	q_b = q_o+r_b \phi_+ + r_b \sum_\alpha \epsilon_\alpha \psi_\alpha,
\end{equation}
with $\mathcal{P}_{V+}q_b=q_b$.
Note that the solutions for  $\theta$ in \eref{rbV} are exact,
namely, $O(r^5)$ term in \eref{dThetaSV} does not change the solution $\theta$.
Because of the $D_3$ symmetry of the reduced action,
$\theta=0$ is exactly fixed. 
This is expected by the condition $\mathcal{P}_{V+}q_b=q_b$ in \eref{PV}.

The solutions corresponds to $k=1,2$ are just 
copies of $q_b$: $\mathcal{C}^k q_b$.
This is a direct result of $D_3$ symmetry of $S_{LS}(r,\theta)$.
The symmetry by $\mathcal{S}$ does not make new solution,
because $\mathcal{S}q_b=q_b$.

The action at the bifurcated solutions is
\begin{eqnarray}
S_{LS}(r_b,0)=S[q_b]-S[q_o]
=\frac{2}{3 A_3(0)^2}\kappa^3
	+\frac{2A_4}{3 A_3(0)^4}\kappa^4
	+O(\kappa^5)
\end{eqnarray}
and the second derivatives at the bifurcated solutions are 
$\left. r^{-1}\partial_r \partial_\theta S_{LS}\right|_{r=r_b,\theta=0}=0$ and
\begin{eqnarray}
\fl
\left.\partial_r^2 S_{LS}(r,\theta)\right|_{r=r_b,\theta=0}
	=\kappa+A_3(0)r_b+\frac{A_4(0)}{2}r_b^2+O(r_b^3)
	=-\kappa+\frac{2A_4(0)}{3 A_3(0)^2}\kappa^2
		+O(\kappa^3),\\
\fl
\left.r^{-2}\partial_\theta^2 S_{LS}(r,\theta)\right|_{r=r_b,\theta=0}
=-\frac{3}{2}A_3(0)r_b+O(r_b^3)
=3\kappa
	+\frac{2A_4(0)}{A_3(0)}\kappa^2+O(\kappa^3). \label{K2V}
\end{eqnarray}
Namely, the Hessian of the bifurcated solution has
non-degenerate eigenvalues $-\kappa+O(\kappa^2)$ and $3\kappa+O(\kappa^2)$.
Since $-\kappa$ and $3\kappa$ has opposite sign,
the bifurcated solutions are saddle.

As stated in \sref{secNonDegeneratedCase},
order $1$ bifurcation can describe 
fold bifurcation or ``both-sides'' bifurcation.
Since, symmetry group for $q_o$ and $q_b$ are different
and three copies $q_b$, $\mathcal{C}q_b$, $\mathcal{C}^2 q_b$
exist,
this  bifurcation should be ``both-sides''.

\subsection{Representation VI: $\mathcal{P_C}'=0,\, \mathcal{M}'=-1$ and $\mathcal{S}'=\pm1$}
\label{subTypeVI}
This is another two  dimensional representation with
\begin{equation}
\tilde{\mathcal{B}}=\left(
	\begin{array}{rr}
		\cos(\pi/3) & -\sin(\pi/3) \\
		\sin(\pi/3) & \cos(\pi/3)
\end{array}
\right)
\mbox{ and }
\tilde{\mathcal{S}}=\left(
	\begin{array}{cc}
		1 & 0 \\
		0 & -1
\end{array}
\right).
\end{equation}
This is the faithful representation of $D_6$.
Let $\phi_+$ and $\phi_-$ be the eigenfunctions with $\mathcal{S}\phi_\pm=\pm \phi_\pm$.
Then $\mathcal{P_S}\phi_+=\phi_+$ and $\mathcal{P_S}\phi_-=0$.
Moreover, $\mathcal{P_C}\phi_\pm=\mathcal{P_M}\phi_\pm=0$.
Note that
the function $\phi_-$ has invariance $\mathcal{P_{MS}}\phi_-=\phi_-$,
while $\mathcal{P_{MS}}\phi_+=0$.

The symmetry for $\{\phi_+, q_o,S\}$ is $\mathcal{S}$.
Therefore, the projection operator is
\begin{equation}
\mathcal{P}_{VI+}=\mathcal{P_S}.
\end{equation}
$\mathcal{P}_{VI+}$ excludes $\phi_-$: $\mathcal{P}_{VI+}\phi_-=0$.
Therefore
by the corollary \ref{corollaryProjectionOperator},
a bifurcation occurs for the direction of $\phi_+$
and invariance of $q_b$ is  $\mathcal{P}_{VI+}=\mathcal{P_S}$.
The broken symmetries are $\mathcal{C}$ and $\mathcal{M}$.
Since $\braket{\phi_+^3}=0$ by $\mathcal{M}\phi_+=-\phi_+$,
this bifurcation is order 2.
The bifurcation pattern in this direction is
\begin{equation}
\label{D1}
D_6 \to D_1=\{1, \mathcal{S}\}.
\end{equation}
We usually don't say $D_1$,
however, this notation is convenient for our purpose.
See \ref{BifurcationDn}.

On the other hand,
since the symmetry for $\{\phi_-, q_o,S\}$ is $\mathcal{MS}$,
the projection operator for them is
\begin{equation}
\mathcal{P}_{VI-}=\mathcal{P_{MS}}.
\end{equation}
Since $\mathcal{P}_{VI-}$ excludes $\phi_+$,
by the corollary \ref{corollaryProjectionOperator} a bifurcation occurs for the direction of $\phi_-$
and $q_b$ has invariance $\mathcal{P}_{VI-}=\mathcal{P_{MS}}$.
The symmetry for $\mathcal{C}$, $\mathcal{M}$, and $\mathcal{S}$
are all broken,
whereas the symmetry for $\mathcal{MS}$ remains.
The bifurcation pattern in this direction is
\begin{equation}
\label{D1prime}
D_6 \to D_1'=\{1, \mathcal{MS}\}.
\end{equation}
Here, $'$ is used again to distinguish it from $D_1$ in \eref{D1}.
Since $\braket{\phi_-^3}=0$ by $\mathcal{M}\phi_-=-\phi_-$,
this bifurcation is also order 2.

Therefore, at this bifurcation point, two order 2 bifurcations occur:
one for $\phi_+$ direction with $\mathcal{P_S}$ invariance
and another for $\phi_-$ direction with $\mathcal{P_{MS}}$ invariance.
Let us denote them by $q_{b\pm}$:
\begin{equation}
q_{b\pm}=q_o+r_{b\pm}\phi_\pm
	+r_{b\pm}\sum_{\mathcal{P}_{VI\pm}\psi_\alpha=\psi_\alpha}
			\epsilon_\alpha(r_{b\pm},\theta_\pm) \psi_\alpha.
\end{equation}
Then a question arises.
What is the relation between actions or second derivatives for one and another?
To see this relation, let us calculate $S_{LS}$ in this subspace.

Let $\phi(\theta)=\cos(\theta)\phi_++\sin(\theta)\phi_-$.
Then the reduced action should be
apparently invariant under the symmetry group $D_6$,
because any element of $D_6$ change $\phi(\theta)$.
Namely, $S_{LS}(r,\theta)$ is invariant under the transformations
$\theta \to \theta+2\pi k/6$, $k=0,1,2,\dots,5$ and $\theta \to -\theta$.
The $D_6$ invariance determines the form of $S_{LS}$ in the following:
\begin{equation}
\label{SforVI}
S_{LS}(r,\theta)
=\frac{\kappa}{2}r^2
	+\frac{A_4(0)}{4!}r^4
	+\frac{A_6(\theta)}{6!}r^6
	+O(r^8),
\end{equation}
\begin{equation}
A_6(\theta)=A_{6+}\cos(3\theta)^2+A_{6-}\sin(3\theta)^2,
\end{equation}
where $A_4(0)$, $A_{6_\pm}$ are independent from $\theta$.
The $\theta$ dependence of $S_{LS}(r,\theta)$ is
very small
because it appears in $r^6$ term.
Parts of direct derivation for \eref{SforVI} are shortly shown in \ref{AkForVI}.
\Fref{figSLSforVI} shows a typical behaviour of $S_{LS}$
in orthogonal coordinates $(x,y)=r(\cos\theta,\sin\theta)$.
\begin{figure}
   \centering
\includegraphics[width=6cm]{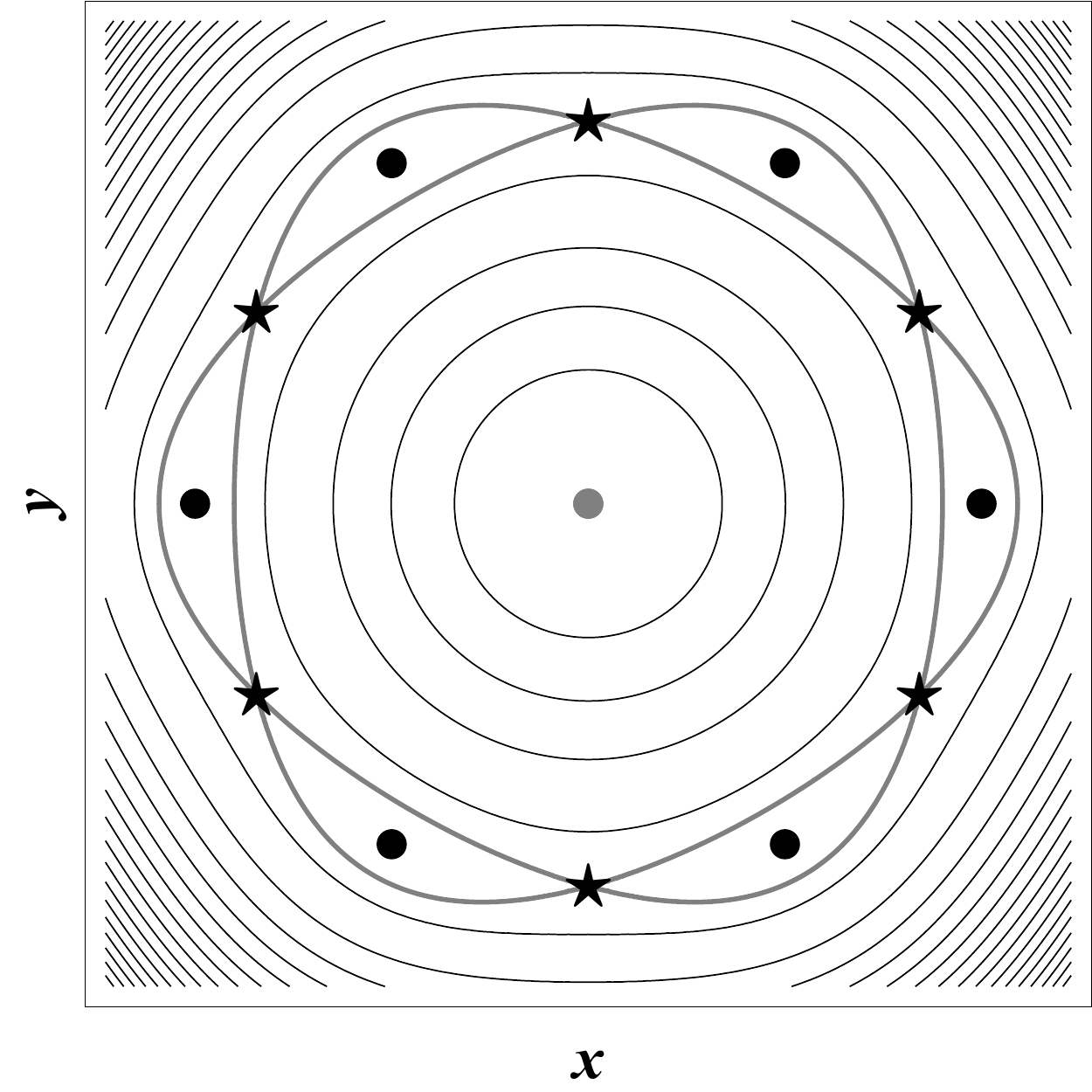}
\includegraphics[width=6cm]{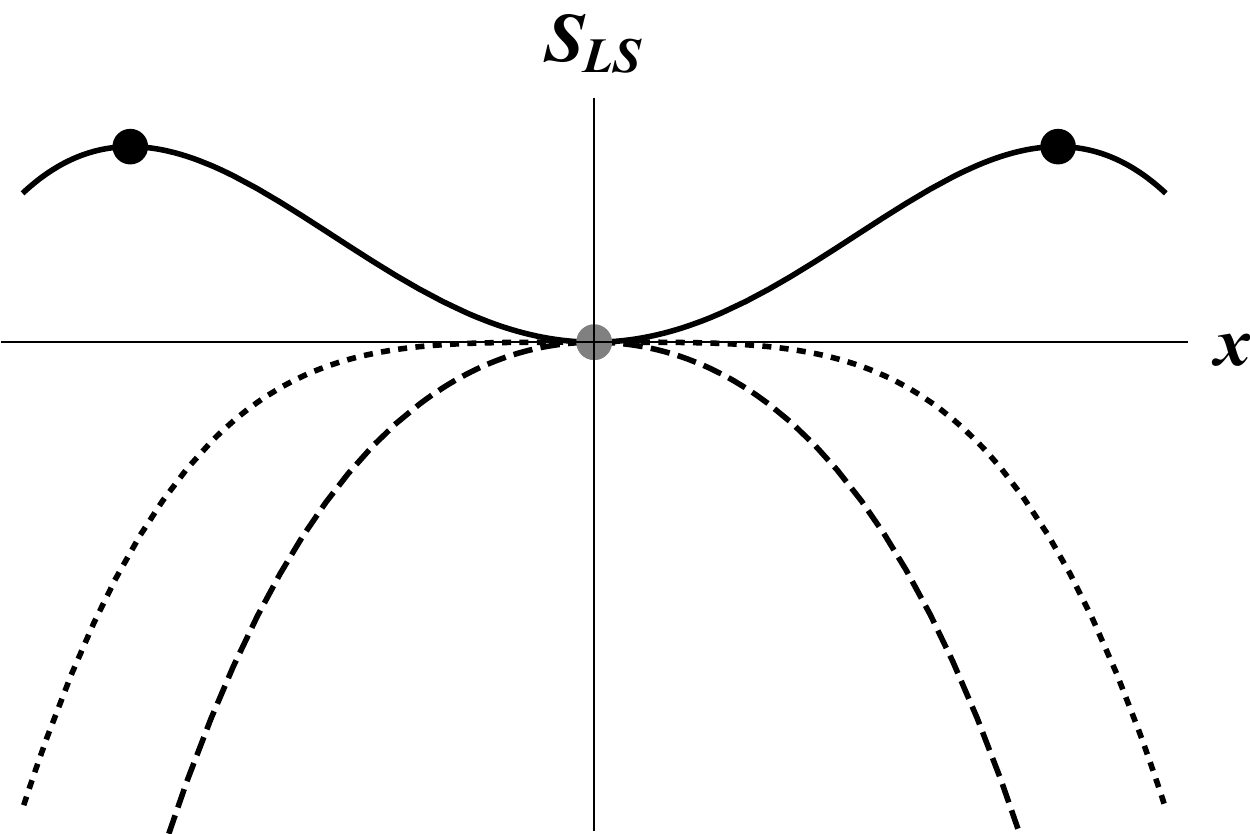}
   \caption{
   Reduced action $S_{LS}$ 
   of representation VI for 
   $A_4(0)<0, A_{6+}>A_{6-}>0$.
   Left: Contour plot of $S_{LS}$ for $\kappa>0$
   in orthogonal coordinates $(x,y)$.
   Gray circle at the centre is $q_0$
   and black circles and black stars are
   $\mathcal{B}^k q_{b+}$ and $\mathcal{B}^k q_{b-}$
   with $k=0,1,2,\dots,5$ respectively.
   For this assignment of $A_4(0)$ and 
   $A_{6\pm}$,
   $\mathcal{B}^k q_{b+}$ are local maximum 
   and $\mathcal{B}^k q_{b-}$ are saddle.
   Right:
   $S_{LS}$ for $y=0$.
   Three curves represent $\kappa$ negative (dashed), zero (dotted) 
   and positive(solid curve) respectively.
   For $A_4(0)<0$, bifurcated solutions $q_{b+}$ and $-q_{b+}$ (black circles)
   exist for $\kappa>0$.
   Gray circle at the origin is 
   $q_0$.
   }
   \label{figSLSforVI}
\end{figure}

The stationary points for $\theta$ are exactly given by
\begin{eqnarray}
\theta_{+k}=\frac{2\pi}{6}k,
\mbox{ and }
\theta_{-k}=\frac{\pi}{2}+\frac{2\pi}{6}k
\mbox{ with }k=0,1,2,\dots,5.
\end{eqnarray}
Let us write the solutions of $k=0$ as $\theta_{\pm}$ and $q_{b\pm}$,
then the other solutions are copies of them:
$\{q_{b+}, \mathcal{B}q_{b+},\dots,\mathcal{B}^5 q_{b+}\}$
and $\{q_{b-}, \mathcal{B}q_{b-},\dots,\mathcal{B}^5 q_{b-}\}$.
For $\theta_\pm$, the reduced actions are
\begin{equation}
\label{SLSforVI}
S_{LS}(r,\theta_\pm)
=\frac{\kappa}{2}r^2
	+\frac{A_4(0)}{4!}r^4
	+\frac{A_{6\pm}}{6!}r^6
	+O(r^8).
\end{equation}

Then the stationary points in $r$ are
$r=0$ and
\begin{eqnarray}
r_{b+}=\pm\left(\frac{-6\kappa}{A_4(0)}\right)^{1/2}
	\left(1+\frac{3A_{6+}}{20 A_4(0)^2}\kappa
		+O(\kappa^{2})
	\right),
\label{rbPlusForVI}\\
r_{b-}=\pm\left(\frac{-6\kappa}{A_4(0)}\right)^{1/2}
	\left(1+\frac{3A_{6-}}{20 A_4(0)^2}\kappa
		+O(\kappa^{2})
	\right).\label{rbMinusForVI}
\end{eqnarray}
Bifurcated solutions
appear one-side,
namely $\kappa>0$ if $A_4(0)<0$,
or $\kappa<0$ if $A_4(0)>0$.
They are order 2 bifurcations.
The values of action at the bifurcated solutions are
\begin{equation}
S_{LS}(r_{b\pm},\theta_{\pm})
=-\frac{3}{2 A_4(0)}\kappa^2
	-\frac{3A_{6\pm}}{10 A_4(0)^3}\kappa^3
	+O(\kappa^{4}).
\end{equation}
The sign of the first term is the same as $\kappa$
because $\kappa$ and $A_4(0)$ have opposite signs.
The difference of action between bifurcated solutions
are small,
because it appears in $\kappa^3$
term:
\begin{equation}
S_{LS}(r_{b+},\theta_{+})
-S_{LS}(r_{b-},\theta_{-})
=-\frac{3(A_{6+}-A_{6-})}{10 A_4(0)^3}\kappa^3+O(\kappa^{4}).
\end{equation}
Since $-\kappa^3/A_4(0)^3 >0$,
the sign of the difference  is the same  of that of 
$A_{6\pm}$.
The second derivatives at bifurcated solutions are
$r^{-1}\partial_r \partial_\theta S(r,\theta)|_{r_{b\pm}, \theta_\pm}
=0$ and
\begin{eqnarray}
\partial_r^2 S(r,\theta)|_{r_{b\pm}, \theta_\pm}
	=-2\kappa+\frac{3A_{6\pm}}{5 A_4(0)^2}\kappa^2+O(\kappa^{3}),\\
r^{-2}\partial_\theta^2 S(r,\theta)|_{r_{b\pm}, \theta_\pm}
	=\frac{9(A_{6\mp} -A_{6\pm})}{10 A_4(0)^2}\kappa^2+O(\kappa^3).
	\label{VIpatialThetaSquareS}
\end{eqnarray}
The $\pm$ symbol in the last equation should be read
$A_{6-}-A_{6+}$ for $\theta+$
and $A_{6+}-A_{6-}$ for $\theta-$.
Namely $A_6$ for other minus $A_6$ for here.
Since the main term of $\partial_r^2 S_{LS}$ for $q_{b\pm}$ are common,
while the main term of $r^{-2}\partial_\theta^2 S_{LS}$ has opposite sign,
one solution is a local minimum (if $A_4(0)>0$) or maximum (if $A_4(0)<0$),
while the other is saddle.

\section{Bifurcations of  figure-eight solutions}
\label{bifurcationsFigureEight}
A
figure-eight solution has $D_6$ symmetry and 
bifurcation patterns of $D_6$ are already described in section \ref{bifurcationOfD6}
based only on the algebraic structure of the group,
where the underlying symmetries for $\mathcal{B}$ and $\mathcal{S}$ have no meanings.
In this section, we describe the contents of the symmetry of bifurcated solutions
for figure-eight solutions.

Obviously,
the symmetry $\mathcal{C}=\sigma^{-1}R^{1/3}$ describes
choreographic symmetry.
Other symmetry described by $\mathcal{M}$, $\mathcal{S}$ and
$\mathcal{MS}$ is connected to geometric symmetries of locus of solutions.
Here, locus is defined by neglecting time and exchange of particles.
Then,
$\mathcal{M}=\mu_x R^{1/2}$ is connected to $Y$-axis symmetry,
$\mathcal{S}=-\tau \Theta$ to point symmetry around the origin,
and $\mathcal{MS}=-\mu_x \tau R^{1/2}\Theta$ to $X$-axis symmetry.

Based on  bifurcation patterns of $D_6$ 
described in section \ref{bifurcationOfD6},
bifurcations of 
figure-eight
solutions are summarized in 
the table \ref{tableBifurcationsFigureEight}.
The orbit of  solution bifurcated at $a=-0.2142$ under $U_h$
is shown in \fref{figSolutionBM02}.
All other orbits  are shown in \cite{Fukuda2019}.
Each bifurcation pattern yields each bifurcated solution
with different choreographic and geometrical symmetry.
The bifurcated solution by  bifurcation V is Sim\'o's H solution
\cite{SimoDynamical}.
The bifurcation V and VI was found by Mu\~{n}oz-Almaraz et al.~in 2006 \cite{Munoz2018}
and confirmed by the present authors \cite{Fukuda2019}.
Bifurcations I to IV are found by the present authors \cite{Fukuda2019}.
Numerical calculations show that
correspondence between parameter $\xi$ and $\kappa$
is one to two for I,
while one to one for II to VI bifurcations.
Therefore, 
order 1 bifurcation in I is surely fold
while in V is ``both-sides'', 
and order 2 bifurcations in II, III, IV and VI are ``one-side'' as expected.

\begin{table}
\caption{\label{tableBifurcationsFigureEight}
Bifurcation patterns of figure-eight solutions
characterized by $\mathcal{P_C}'$, $\mathcal{M}'$, and $\mathcal{S}'$
of the eigenfunction of Hessian $\mathcal{H}$.
The column $d$ represents degeneracy number or dimensions for the representation.
Symmetry represents the symmetry of the bifurcated solution.
The last column describes the type of bifurcation,
fold, one-side, or both-sides.
The bifurcated solution in I to IV are choreographic,
and V to VI are non-choreographic.
}
\footnotesize
\begin{tabular}{c|rrr|c|rrrl|l}
\br
Pattern & $\mathcal{P_C}'$ & $\mathcal{M}'$ & $\mathcal{S}'$ & $d$
&$a$ for $U_h$ &\multicolumn{2}{c}{$T$ for $U_{LJ}$}
& Symmetry &Type\\
&&&&&&$\alpha_-$&$\alpha_+$&\\
\hline
I & 1 & 1 & 1 & 1 &&14.479&14.479& $X$- and $Y$-axis&fold\\
II & 1 & 1 & $-1$ & 1 &&&17.132& $Y$-axis& one-side  \\
III & 1 & $-1$ & 1 & 1 &&&18.615& $O^{\rm b}$&one-side  \\
IV & 1 & $-1$ & $-1$ & 1 &$- 0.2142$&14.595&& $X$-axis&one-side  \\
\hline
V & 0 & 1 & $\pm 1$ & 2&0.9966&14.836&16.878& $X$- and $Y$-axis&both-sides \\
VI & 0 & $-1$ & $\pm 1$ & 2 &1.3424&14.861&16.111& $O^{\rm b}$ or$^{\rm a}$
$X$-axis&double$^{\rm a}$ one-side\\
\br
\end{tabular}

$^{\rm a}$ The  bifurcation VI yields 
two kind of bifurcated solutions
with different symmetry:
$O$ symmetric one
or $X$-axis symmetric one.\\
$^{\rm b}$ Symbol $O$ represents
point symmetry around the origin.
\end{table}
\begin{figure}
   \centering
\includegraphics[width=6cm]{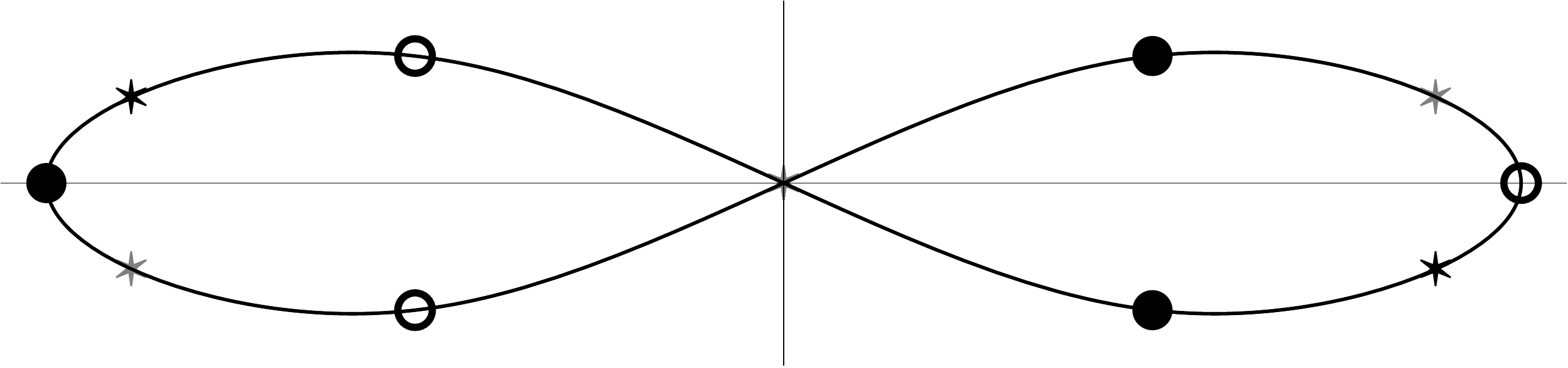}
\includegraphics[width=6cm]{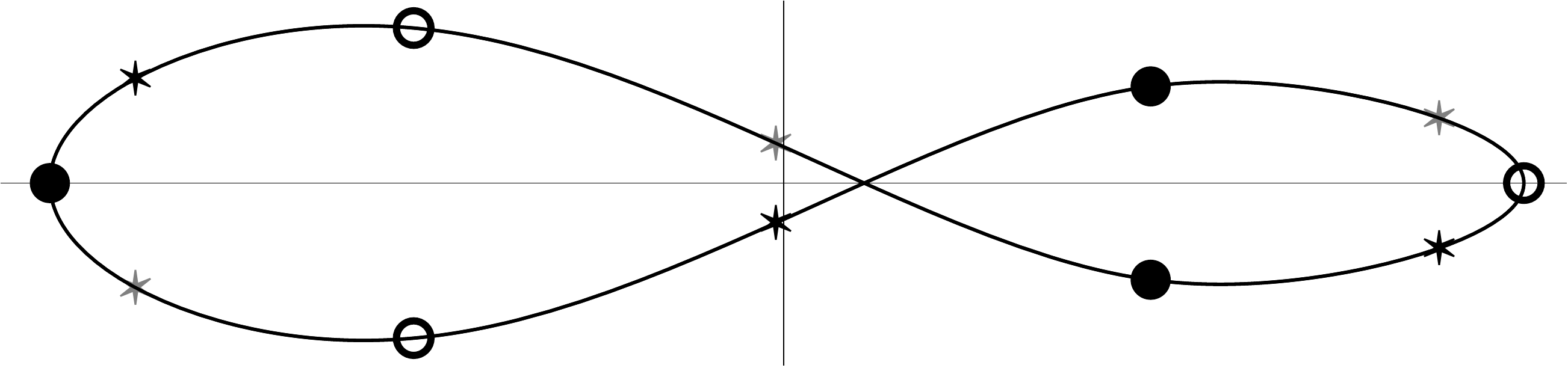}
   \caption{The figure-eight (left) and bifurcated solution for IV (right) 
   under $U_h$ at $a=-0.2$.
   Points represent position of particles at
   $t=-T/12$ (solid circles), $0$ (black stars), $T/12$ (hollow circles),
   and $2 T/12$ (grey stars).}
   \label{figSolutionBM02}
\end{figure}

Bifurcation in I is
fold
bifurcation between $\alpha_\pm$ solutions in $U_{LJ}$.

Bifurcations in II to IV 
are ``one-side'' bifurcations 
that yields ``less symmetric eights'',
namely choreographic solutions with less symmetry.
Existence of ``less symmetric eights'' were predicted by Alain Chenciner \cite{ChencinerICM2002,Chenciner2002}.
They surely exist in $U_h$ and $U_{LJ}$.
The bifurcation at $a=-0.2142$ in $U_h$
is a bifurcation of pattern IV
that
yields choreographic solution that 
looses $Y$-axis symmetry and keeps $X$-axis symmetry.
However, this branch does not reach $a=1$.
The orbit of the bifurcated solution at $a=-0.2$ is shown in
\fref{figSolutionBM02}.

Bifurcated solution in IV  has
projection operator $\mathcal{P}_{IV}=\mathcal{P_C}\mathcal{P_{MS}}$,
and one of the bifurcated solution  in VI
has $\mathcal{P}_{VI-}=\mathcal{P_{MS}}$.
They have non-vanishing angular momentum.
The reason is the following.
In these bifurcations both 
$\mathcal{M}$ symmetry and $\mathcal{S}$ symmetry are broken.
As a result,
the bifurcated solution  looses both 
$Y$-axis symmetry and the point symmetry around the origin.
Then the total signed area has non-vanishing value
which is equal to $T$ times of angular momentum $c$:
\begin{equation}
c T=\int_0^T d t\sum_{k=0,1,2}  r_k \times \frac{dr_k}{d t} \ne 0.
\end{equation}
Therefore, these bifurcated solutions have non-vanishing angular momentum.
See \fref{figSolutionBM02}.

Note that there are no direct paths $D_6 \to C_2=\{1,\mathcal{M}\}$ 
that would produce non-choreographic solution 
with $Y$-axis symmetry
and without $X$-axis symmetry. 
One possible 
path
is cascading bifurcation via Sim\'o's H:
$D_6 \to D_2\to C_2$. Another one is $D_6 \to C_6\to C_2$.
See \fref{figbifurcationsAndSymmetryBreaking}.

\section{Summary and discussions}
\label{summaryDiscussions}
We applied group theoretical method in bifurcation 
to investigate bifurcations of periodic solutions
in Lagrangian system. 
The results are summarized in the table 1, 2, 3 and the figure 1. 
In this method,
bifurcated solution is a stationary point of the action.
The second derivative of the action, Hessian $\mathcal{H}$, has 
important role.
A non-trivial zero of the eigenvalue of Hessian
yields bifurcation.
Since eigenvalues and eigenfunctions are classified by 
irreducible representations
of the symmetry of the Hessian,
group theories have important role in bifurcations.
In this method,
symmetry breaking pattern and symmetry of bifurcated solution
for each bifurcation is clear.
Symmetry of Lyapunov-Schmidt reduced action
apparently shows existence of the bifurcated solutions.
This method will give an alternative method to analyse bifurcations
of periodic solutions,
although this method will be mathematically equivalent to
methods based on Poincar\'{e} map or Floquet matrix.

\subsection{Bifurcations of  figure-eight solutions
and Sim\'{o}'s H}

This method gives theoretical explanations to 
numerically obtained bifurcations of figure-eight solutions
under $U_h$ with parameter $a$ 
and $U_{LJ}$ with parameter $T$ in the unified way.
The results are summarized in \tref{tableBifurcationsFigureEight}.

This method
also
predicts patterns for
bifurcations of Sim\'{o}'s H that has symmetry group $D_2$.
Since
$D_2$
is Abelian, it has 4  one-dimensional irreducible representations,
which are characterized by $\mathcal{M}'=\pm 1$ and $\mathcal{S}'=\pm 1$.
The bifurcation patterns are obvious
that are summarized in table \ref{table4RepsOfD2} and \fref{figbifurcationsAndSymmetryBreaking}.
\begin{table}
\caption{\label{table4RepsOfD2}
Four irreducible representations of Group $D_2$
characterized by $\mathcal{M}'$ and $\mathcal{S}'$.
All representations are one-dimensional ($d=1$).
The next two columns represent 
projection operators $\mathcal{P}$ and symmetry group $G$
for the bifurcated solution.
The order represents order $n$ of the bifurcation: 
$\kappa=-A_{n+2}\, r^n/(n+1)!+O(r^{n+1}),\, A_{n+2}\ne0$.
The last column describes the type of bifurcation.
}
\begin{indented}
\item[]
\begin{tabular}{c|rr|c|lccl}
\br
Representation&$\mathcal{M}'$ & $\mathcal{S}'$ &$d$& $\mathcal{P}$ &$G$&Order&Type\\
\hline
I$'$& 1 & 1 & 1 &$\mathcal{P_M}\mathcal{P_S}$&$D_2$&1&fold$^{\rm a}$\\
II$'$& 1 & $-1$ & 1 & $\mathcal{P_M}$&$C_2$ &2&one-side\\
III$'$& $-1$ & 1 & 1 &$\mathcal{P_S}$&$D_1$&2&one-side\\
IV$'$& $-1$ & $-1$ & 1 &$\mathcal{P_{MS}}$ &$D_1'$&2&one-side\\
\br
\end{tabular}
\item[]
$^{\rm a}$Fold bifurcation is suitable, although both-sides is still possible. 
\end{indented}
\end{table}

Gal\'an et al.~\cite{PhysRevLett.88.241101}
takes $m_0$ as a parameter.
In this case,
$m_0 \ne 1$ breaks $\sigma$ symmetry 
$\sigma(r_0,r_1,r_2)=(r_1,r_2,r_0)$,
while $\tau$ symmetry $\tau(r_0,r_1,r_2)=(r_0,r_2,r_2)$ is preserved.
Then, the symmetry group for the figure-eight solution is reduced
into $D_2$.
This is the  symmetry group for Sim\'{o}'s H solution.
Therefore, $D_2\to D_2$ bifurcation in \tref{table4RepsOfD2}
connects the figure-eight solution and Sim\'{o}'s H
by fold bifurcation \cite{PhysRevLett.88.241101}.

\subsection{Group theoretical bifurcation theory}
\subsubsection{Existence of at least one bifurcated solution
in each irreducible representation in $D_n$.}
The arguments in section \ref{secDegeneratedCase}
show that
at least one bifurcated solution exists
in each irreducible representations.
Actually, as shown above,
each irreducible representations of $D_6$ has
at least one projection operator that picks up one direction
of corollary \ref{corollaryProjectionOperator}.
Therefore, each irreducible representation has at least one 
bifurcated solution.
This is also true for $D_n$,
namely, there is at least one projection operator
such that $\mathcal{P}q_o=q_o$ and 
$\mathcal{P}\phi(\theta)=\phi(\theta)$
(one dimension) for each irreducible representation.
A proof is the following:
It is known that
the dimension of each irreducible representation of $D_n$
is one or two.
(For odd $n$:
2 one-dimensional representations and
$(n-1)/2$ two-dimensional ones.
For even $n$:
4 one-dimensional and $n/2-1$ two-dimensional ones.)
In two-dimensional representation,
the degeneracy comes from two eigenfunctions 
$\mathcal{S}\phi_\pm=\pm\phi_\pm$
with definition $\mathcal{S}q_o=q_o$.
Therefore the projection operator 
in corollary \ref{corollaryProjectionOperator}
that has the form
$\mathcal{P}=\mathcal{P}\mathcal{P_S}\ne 0$ always exists,
which picks up $\phi_+$ and excludes $\phi_-$.
This is the projection operator we are looking for.

\subsubsection{Similarity of bifurcation patterns.}
As shown in this paper,
the bifurcation patterns
depend only on the group structure of symmetry group $G$ 
for the original solution $q_o$ and the action $S$.
Therefore,
if two different systems have
symmetry groups $G$ and $G'$,
and $G$ and $G'$ are isomorphic or homomorphic,
the bifurcation patterns of the two systems are the same or similar.

For example. the bifurcation patterns
I to IV of $D_6$ in table \ref{table6RepsOfD6}
and I$'$ to IV$'$ of $D_2$ in table \ref{table4RepsOfD2}
is similar if we neglect $\mathcal{P_C}$ in column $\mathcal{P}$ in 
the former table.
The reason is the following.
In bifurcation patterns I to IV of $D_6$,
the symmetry of $\mathcal{C}$ is kept.
Since $\{1,\mathcal{C},\mathcal{C}^2\}$ is a
normal subgroup of $D_6$, we can make quotient group:
\begin{equation}
D_6/\{1,\mathcal{C},\mathcal{C}^2\}
\cong \{1, \mathcal{M},\mathcal{S},\mathcal{S}\mathcal{M}\}
=D_2.
\end{equation}
Therefore,
bifurcations of $D_6$ keeping $\mathcal{C}$ symmetry
are equivalent to bifurcations of $D_2$.

In the next sub-subsection,
we consider cases where
two completely different system having isomorphic symmetry groups.

\subsubsection{Period $k$ bifurcation.}
\label{BifurcationDn}

Consider period $k$ bifurcations of
a figure-eight solution.
For this case, the periodic condition for an variation
$\delta q(t)$ should be
\begin{equation}
\delta q(t+k T)=\delta q(t),
\end{equation}
where $T$ is the period of 
this figure-eight solution.
That means $\mathcal{B}^{6 k}=\mathcal{R}^k=1$, instead of 
$\mathcal{B}^6=\mathcal{R}=1$.
Therefore, the symmetry group is
\begin{equation}
D_{6 k}=\{1,\mathcal{B},\dots,\mathcal{B}^{6 k-1},
\mathcal{S}, \mathcal{S}\mathcal{B},\dots,
\mathcal{S}\mathcal{B}^{6 k-1}\},\,
\mathcal{B}\mathcal{S}=\mathcal{S}\mathcal{B}^{-1}.
\end{equation}
For example,  
period $k=5$ bifurcation will be determined by 
irreducible representations of $D_{30}$. 
Some of $k=5$ slalom solutions 
by M.~\v{S}uvakov and V.~Dmitra\v{s}inovi\'{c} \cite{Suvakov2013,Suvakov2013NumericalSearch},
and M.~\v{S}uvakov and M.~Shibayama \cite{SuvakovShibayama2016}
will turn out to be bifurcated solutions
of the figure-eight by period 5 bifurcation.

Similarly,
period $3$ bifurcations of Sim\'o's H ($D_2=\{1,\mathcal{M},\mathcal{S},\mathcal{S}\mathcal{M}\}$)
will be described as bifurcation of 
\begin{equation}
D_6'=\{1,\mathcal{M},\mathcal{M}^2,\dots,\mathcal{M}^5,
	\mathcal{S},\mathcal{S}\mathcal{M},\mathcal{S}\mathcal{M}^2,\dots,
	\mathcal{S}\mathcal{M}^5\}
\cong D_6,
\end{equation}
because $\mathcal{M}^6=\mathcal{R}^3=1$ for period 3 bifurcation.
Namely, it must have the same bifurcation patterns
in \tref{table6RepsOfD6} and \fref{figbifurcationsAndSymmetryBreaking}.
Moreover,
period $2$ of $D_3$ or period $6$ of $D_1$
bifurcation will be described by $D_6$.

Consider a periodic solution that is invariant
under an operator $\mathcal{S}=\mathcal{O}\Theta$
where $\Theta$ is the time reversal and $\mathcal{O}$
satisfies $(\mathcal{O}\Theta)^2=1$, 
$\mathcal{O}\mathcal{R}=\mathcal{R}\mathcal{O}$.
A simple example for $\mathcal{O}$ is $\mathcal{O}=-1$.
In this case, $\mathcal{S}q_o=q_o$ means $q_o(-t)=-q_o(t)$.
In general, assuming $q_o$ has no other invariance,
the symmetry group for $q_o$ and $S$ is $D_1=\{1,\mathcal{S}\}$.
Then, period $k$ bifurcation of this solution
will be described by the  dihedral group of 
regular $k$-gon,
\begin{equation}
D_k'=\{1,\mathcal{R},  \dots, \mathcal{R}^{k-1},
	\mathcal{S}, \mathcal{S}\mathcal{R}, \dots,\mathcal{S}\mathcal{R}^{k-1}
	\},\,
	\mathcal{R}\mathcal{S}=\mathcal{S}\mathcal{R}^{-1}.
\end{equation}
For example,
period doubling bifurcation of this system
should be described by bifurcation of $D_2$,
and period 3 bifurcation  by  $D_3$,
and period 6 bifurcation by  $D_6$.
Indeed,
in the  book of K.~R.~Meyer 
and G.~R.~Hall \cite{MeyerHall}
or Meyer and D.~C.~Offin \cite{MeyerOffin},
they describe period doubling,
period 3
and period 6 bifurcations
for Hamiltonian system
that are exactly expected for bifurcations in 
faithful representation of 
$D_2$ (pattern IV or IV$'$ in this paper),
$D_3$ (pattern V)
and $D_6$ (pattern VI),
although we don't consider the stability
of original and bifurcated solution(s) here.
See sections
``Period doubling'' and  ``$k$-bifurcation points''
in \cite{MeyerHall} or \cite{MeyerOffin}.

For $k=2$, $D_2'$ is
\begin{equation}
D_2'=\{1,\mathcal{R},\mathcal{S},\mathcal{SR}\}.
\end{equation}
The faithful representation is $\tilde\mathcal{R}=\tilde\mathcal{S}=-1$.
Therefore bifurcation pattern is
\begin{equation}
D_2' \to \{1, \mathcal{R S}\}=D_1''.
\end{equation}
Therefore,
period doubling bifurcation should be order 2 bifurcation
with  bifurcated solution that satisfies $\mathcal{R S}q_b=q_b$
on one side of parameter.
Two solutions $\{q_b, \mathcal{S}q_b=\mathcal{R}q_b\}$
exist at one side of the bifurcation point.

The faithful representation of $D_3$ is
\begin{equation}
\mathcal{R}=\left(\begin{array}{rr}
	\cos(2\pi/3) & -\sin(2\pi/3) \\
	\sin(2\pi/3) & \cos(2\pi/3)
\end{array}\right),\,
\mathcal{S}=\left(\begin{array}{cc}
	1 & 0 \\
	0 & -1
\end{array}\right).
\end{equation}
This is the same as the irreducible representation in pattern V.
The symmetry breaking pattern in this bifurcation is
\begin{equation}
D_3''=\{1,\mathcal{R},\mathcal{R}^2,
		\mathcal{S},\mathcal{S}\mathcal{R},\mathcal{S}\mathcal{R}^2
	\}
\to
D_1=\{1,\mathcal{S}\}.
\end{equation}
Therefore,
period 3 bifurcation should  be order 1 with $\mathcal{S}q_b=q_b$.
Three bifurcated solutions $\{q_b,\mathcal{R}q_b,\mathcal{R}^2 q_b\}$
exist for both side of parameter.

The faithful representations of $D_6$ produce 
$D_6 \to D_1 \mbox{ or } D_1'$ as shown in
bifurcation pattern VI in this paper.
Therefore
period 6 bifurcation should be order 2
with two kinds of bifurcated solutions
for one side of parameter.
One satisfies $\mathcal{S}q_{1}=q_{1}$,
and other satisfies $\mathcal{R}^3\mathcal{S}q_{2}=q_{2}$.
Each of them has 6 copies
$\{q_1,\mathcal{R}q_1,\dots,\mathcal{R}^5 q_1\}$
and $\{q_2,\mathcal{R}q_2,\dots,\mathcal{R}^5 q_2\}$.

\subsubsection{Symmetry breaking of bifurcation
and preserving of the action.}
As shown in \sref{bifurcationOfD6},
bifurcation in II to VI breaks a symmetry or symmetries.
While, symmetry of action $S$ is always preserved.
The Lyapunov-Schmidt reduced action $S_{LS}$ inherits this invariance.  
As a result, multiple copies of a bifurcated solution by the broken symmetry
will emerge from the bifurcation point:
$\{q_b, g q_b,g^2 q_b,\dots,g^{n_g-1}q_b\}$
where $g$ is a broken symmetry and $n_g$ is the order of $g$.
The locus of copies are congruent.

For example, in bifurcation V,
the bifurcated solution $q_b$ breaks $\mathcal{C}$ invariance,
therefore the invariance of the action under the transformation $\mathcal{C}$
yields three solutions $\{q_b, \mathcal{C}q_b, \mathcal{C}^2 q_b\}$
that have congruent locus.
Similarly,
the breaking of $\mathcal{M}$ yields
two bifurcated solutions $\{q_b, \mathcal{M}q_b\}$.
It is the same for $\mathcal{S}$.
Note that
in the case both $\mathcal{M}$ and $\mathcal{S}$ are broken
while $\mathcal{MS}$ is preserved,
we still have two bifurcated solutions 
$\{q_b, \mathcal{S}q_b=\mathcal{M}q_b\}$
with congruent locus,
since the
invariance $\mathcal{MS}q_b=q_b$
ensures $\mathcal{S}q_b=\mathcal{M}q_b$.
Thus bifurcations in II to IV yield
two bifurcated solutions with congruent locus.
Similarly,
bifurcation  VI yields solutions in
two congruent classes
$\{q_{b+}, \mathcal{B}q_{b+},\dots,\mathcal{B}^5 q_{b+}\}$
and
$\{q_{b-}, \mathcal{B}q_{b-},\dots,\mathcal{B}^5 q_{b-}\}$.

On the other hand, any unbroken symmetry is invisible.
This is because if $g\phi(\theta)=\phi(\theta)$,
the symmetry for the reduced action yields identity
$S_{LS}(r,\theta)=S_{LS}(r,\theta)$ that has no information.
Similarly, if $g q_b=q_b$, $g$ does not produce new solution.

\subsection{Further investigations}
\subsubsection{Stability.}
\label{secStability}
Until now, relations between the behaviour of the action around a solution and the stability of the solution is unclear.
For this reason,
we used terms ``both-sides'' or ``one-side''
instead of ``trans-critical'' or ``pitchfork''.
Actually,
the bifurcation at $a=0.9966$ and $a=1.3424$ for $U_h$ 
does not change the stability of the figure-eight solution
\cite{Munoz2018}.
We confirmed their results.
The stability change/unchanged at $a=-0.2142$ still needs careful investigations.
As shown in \ref{BifurcationDn},
bifurcations of the figure-eight solution
at $a=-0.2142,\, 0.9966$ and $1.3424$
is equivalent, in a group theoretical point of view, 
to period 2, 3, and 6 bifurcations.
We suspect this might be an origin of exotic behaviour of stability change
at these points.
So,
further systematic and careful numerical investigations 
and theoretical developments for changing stability 
at  bifurcation points
with a group theoretical point of view
are required.

\subsubsection{Stationary point at finite distance.}
To describe bifurcations,
it is enough to consider infinitesimally small distance
from the original periodic solution.
Then the first non-zero $A_n$ in \eref{eqBifurcatedSolutionForR}
determines the properties of each bifurcation.
Can we predict the existence of other periodic solutions
at finite distance
from the derivatives at original?

To make the argument clear,
take the figure-eight solution as an original one
and consider the subspace selected by projection operator
$\mathcal{P}=\mathcal{P_M}\mathcal{P_S}$,
which is the subspace where Sim\'{o}'s H solution lives.
The reduced action  is 
a function of one variable $r$:
\begin{equation}
S_{LS}(r)
=\frac{\kappa}{2}r^2+\sum_{n\ge 3}\frac{A_n}{n!}r^n
=\frac{\kappa}{2}r^2
	+\frac{A_3}{3!}r^3
	+\frac{A_4}{4!}r^4
	+\dots.
\end{equation}
The term $A_3 r^3/3!$ goes to $-\infty$ 
for either $r\to +\infty$ or $-\infty$.
Now, consider what will happen if $A_4$ is positive
or simply if $S_{LS}(r)$ goes to sufficiently large positive for a finite value of $r$.
Then, there must be at least one more stationary point at a finite distance.
By the theorem \ref{theoremProjectionOperator},
any stationary point in this subspace is a solution of the equations of motion
and the solution has the symmetry of Sim\'{o}'s H:
$\mathcal{P}=\mathcal{P_M}\mathcal{P_S}$.
Such solution surely exists near the figure-eight and Sim\'{o}'s H solutions.
Mun\~oz-Almaraz et al.~\cite{Munoz2018} showed numerically that
Sim\'{o}'s H solution has fold bifurcations at the both end of
a interval
of $a$. Therefore, near the end of
this interval,
the figure-eight, 
Sim\'{o}'s H and one other solution exist.
Then at the end of the interval,  Sim\'{o}'s H and the other solution
are pair-annihilated by fold bifurcation.
The present authors confirmed their results numerically.

Similar question  also arises for bifurcation pattern VI.
Numerical calculations for $U_h$ show that the bifurcated solutions emerge
in $\kappa>0$ side.
This means $A_4(0)<0$.
Then if $A_{6\pm}>0$ or if $S_{LS}$ becomes sufficiently large positive
for a finite value of $r$,
then at least another solution exists at finite distance.

Can we theoretically treat the figure-eight solution,
bifurcated solution(s)
and solution(s) at finite distance
at once,
and can describe the observed fold bifurcations?
It will be very interesting
if we can do it
by  considering the behaviour of action around
the figure-eight solution.

\subsubsection{Equality  of the second derivatives of $S_{LS}(r,\theta)$
and the eigenvalues of Hessian at $q_b$.}
We have shown in appendix D that the second derivative of the reduced action
at $q_b$
is equal to corresponding eigenvalues of Hessian
at $q_b$ to order $r^2$
using ordinary perturbation method for the eigenvalues.
However,
the interesting term in \eref{VIpatialThetaSquareS}
is of order $r^4$.
So, equality to order $r^2$ is not enough for bifurcation VI.
Since ordinary perturbation methods are tedious and inefficient,
we need to find efficient methods to show
equality or inequality of them.

\ack
The present authors express sincere thanks 
to Hiroshi Kokubu, Mitsuru Shibayama, 
Kazuyuki Yagasaki,
and Michio Yamada
for valuable discussions and comments for this work.
Special thanks to 
Francisco Javier Mu\~{n}oz-Almaraz,
Jorge Gal\'{a}n-Vioque, Emilio Freire and Andre Vanderbauwhede
for give us very valuable notes.
This work was supported by JSPS Grant-in-Aid for Scientific Research 17K05146 (HF) and 17K05588 (HO).

\appendix

\section{Proof of inheritance of symmetry}
\label{secInvarianceOfIntegrals}
In this section,
we will prove the equalities  in section \ref{secInheritance}.

A proof of  lemma \ref{lemmaInvariance} is the followings;
\begin{proof}
Since $G$ is a symmetry group for $q_o$ and $S$,
for arbitrary variation $\delta q$, we have
\begin{equation}
S[q_o+\delta q]=S[q_o + g \delta q].
\end{equation}
Expansions of the action around $q_o$ yields
\begin{equation}
\fl
\frac{1}{2}\int d t\, \delta q \mathcal{H}\delta q 
	+ \sum_{n\ge 3}\frac{1}{n!}\braket{\delta q^n}
=\frac{1}{2}\int d t\, (g\delta q) \mathcal{H}(g\delta q) 
	+ \sum_{n\ge 3}\frac{1}{n!}\braket{(g\delta q)^n}.
\end{equation}
Since, $\delta q$ is arbitrary function, this equation holds for order by order:
\begin{eqnarray}
\int d t\, \delta q \mathcal{H}\delta q 
=\int d t\, (g\delta q) \mathcal{H}(g\delta q),\\
\braket{\delta q^n}=\braket{(g\delta q)^n}.\label{nThOrderTerm}
\end{eqnarray}
So, if we take $\delta q = a_i e_i + a_j e_j$, we get
\begin{equation}
\int d t\, (g e_i)\mathcal{H}(g e_j)=\int d t\, e_i\mathcal{H}e_j=\lambda_i \delta_{ij}.
\end{equation}
Where $\lambda_i$ is the eigenvalue of $\mathcal{H}$ for $e_i$.
Moreover, if we 
put $\delta q=a_i e_i+a_j e_j +\dots$ into \eref{nThOrderTerm}
and
compare the corresponding term $a_i\vphantom{)}^{n_i}a_j\vphantom{)}^{n_j}\dots$
of left and right side,
we get the lemma \ref{lemmaInvariance}.
\end{proof}

Next,
let us prove the theorem \ref{theoryInvarianceOfSLS}.
\begin{proof}
The definition of $S_{LS}(r,\theta)$ and the invariance of action under $g$
yields
\begin{eqnarray}
S_{LS}(r,\theta)
&=S\left[q_o+r\phi(\theta)+r\sum_\alpha \epsilon_\alpha(r,\theta) \psi_\alpha)
	\right]\\
&=S\left[q_o+r'\phi(\theta')+r\sum_\alpha \epsilon_\alpha(r,\theta) g\psi_\alpha)
	\right].
\end{eqnarray}
So, if
\begin{equation}
\label{epsilonForThetaDash}
r\epsilon_\alpha(r,\theta)g\psi_\alpha
=r'\epsilon_\alpha(r',\theta')\psi_\alpha
\end{equation}
is satisfied, \eref{invarianceSLS} is satisfied.
Where $r'\epsilon_\alpha(r',\theta')$ is the solution of \eref{recEqForEpsilon}
for $r'\phi(\theta')$.
Therefore our goal is to show that $r'\epsilon_\alpha(r',\theta')$ in \eref{epsilonForThetaDash}
surely satisfies \eref{recEqForEpsilon} for $r'\phi(\theta')$.
This is true,
because $\epsilon_\alpha$ in \eref{recEqForEpsilon} is the unique solution
for $r$ and $\phi(\theta)$ including the sign.
However, a direct proof will be interesting.

Now, let us show this.
For each $\psi_\alpha$, there is an orthogonal matrix representation 
$\tilde{g}_\alpha$ of $g$,
\begin{equation}
g\psi_\alpha
=\psi_\alpha\tilde{g}_\alpha.
\end{equation}
If
$\psi_\alpha$
belongs one-dimensional representation,
$\tilde{g}_\alpha=\pm1$.
If 
$\psi_\alpha$
belongs two-dimensional representation,
$\tilde{g}_\alpha$ is a 2 by 2 matrix:
for $s=\pm$,
\begin{equation}
g\psi_{\alpha s}=\sum_{s'=\pm1}\psi_{\alpha s'}\tilde{g}_{\alpha s', \alpha s},
\mbox{ and }
\sum_{s=\pm}\tilde{g}_{\alpha s', \alpha s}\tilde{g}_{\alpha s'', \alpha s}
=\delta_{s',s''}.
\end{equation}

Let us start from the definition \eref{recEqForEpsilon} of
$\epsilon_\alpha(r,\theta)$.
Then the invariance of integrals under $g$ yields
\begin{eqnarray}
\fl
r\epsilon_{\alpha s}(r,\theta)
=-\frac{1}{\lambda_\alpha}
	\sum_{n\ge 3}\frac{1}{(n-1)!}
		\Braket{
			\left(r g\phi(\theta)
				+r\sum_\beta\epsilon_{\beta}(r,\theta)g\psi_\beta\right)^{n-1}
				g\psi_{\alpha s}
				}\nonumber\\
\fl
=-\frac{1}{\lambda_\alpha}
	\sum_{n\ge 3}\frac{1}{(n-1)!}
		\Braket{
			\left(r'\phi(\theta')
			+r\sum_\beta
				\psi_\beta
				\tilde{g}_\beta
				\epsilon_\beta(r,\theta)\right)^{n-1}
				\sum_{s'=\pm}
					\psi_{\alpha s'}\tilde{g}_{\alpha s', \alpha s}
				},
\end{eqnarray}
where $\psi_\beta \tilde{g}\epsilon_\beta$ is the matrix  notation for
\begin{equation}
\psi_\beta \tilde{g}_\beta\epsilon_\beta
=\sum_{s',s''=\pm}\psi_{\beta s'} \tilde{g}_{\beta s', \beta s''}
						\epsilon_{\beta s''}.
\end{equation}
Therefore,
\begin{eqnarray}
\fl
r(\tilde{g}_\alpha\epsilon_\alpha(r,\theta))_{s''}
=r\sum_{s=\pm} \tilde{g}_{\alpha s'',\alpha s}\epsilon(r,\theta)_{\alpha s}
\nonumber\\
\fl
=-\frac{1}{\lambda_\alpha}\sum_{n\ge 3}\frac{1}{(n-1)!}
	\Braket{
			\left(r'\phi(\theta')
			+r\sum_\beta
				\psi_\beta
				\tilde{g}_\beta
				\epsilon_\beta(r,\theta)
			\right)^{n-1}
				\sum_{s}\tilde{g}_{\alpha s'',\alpha s}
				\sum_{s'}
					\psi_{\alpha s'}
					\tilde{g}_{\alpha s',\alpha s}
			}\nonumber\\
\fl
=-\frac{1}{\lambda_\alpha}\sum_{n\ge 3}\frac{1}{(n-1)!}
	\Braket{
			\left(
			  r'\phi(\theta')
			+r\sum_\beta
				\psi_\beta
				\tilde{g}_\beta
				\epsilon_\beta(r,\theta)
			\right)^{n-1}
				\psi_{\alpha s''}
	}.
\end{eqnarray}
So,
$r\tilde{g}_\alpha\epsilon_\alpha(r,\theta)$ satisfies
the definition for $r'\epsilon_\alpha(r',\theta')$.
This is what we wanted to show.
\end{proof}

\section{$A_k(\theta)$, $k=3,4$ for V}
\label{AkForV}
In this section, we calculate $A_k(\theta)$, $k=3,4$ for bifurcation V.
We use notation $\mathcal{S}\phi_\pm=\pm\phi_\pm$
and $\phi(\theta)=\cos(\theta)\phi_++\sin(\theta)\phi_-$.
We don't need $\mathcal{M}$ operator for calculations
in this section.

\subsection{$A_3(\theta)$ for V}
\label{secA3forV}
Expansion of $\phi(\theta)^3$ yields
\begin{equation}
\label{phi3ForVstep1}
\braket{\phi(\theta)^3}
=\cos(\theta)^3\braket{\phi_+^3}+3\cos(\theta)\sin(\theta)^2\braket{\phi_+\phi_-^2},
\end{equation}
because $\braket{\phi_+^2\phi_-}=\braket{\phi_-^3}=0$ 
by $\mathcal{S}\phi_\pm=\pm\phi_\pm$.
Using the invariance of the lemma \ref{lemmaInvariance}
for  $g=\mathcal{B}$,
we have $\braket{\phi_+^3}=\braket{(\mathcal{B}\phi_+)^3}$
and $\braket{\phi_+\phi_-^2}=\braket{(\mathcal{B}\phi_+)(\mathcal{B}\phi_-)^2}$.
On the other hand,
$\phi_\pm$ are mixed by $\mathcal{B}$.
Using the expression $\tilde\mathcal{B}$,
\begin{equation}
(\mathcal{B}\phi_+ ,\mathcal{B}\phi_-)
=(\phi_+,\phi_-)\tilde\mathcal{B}
=(\phi_+,\phi_-)
\left(\begin{array}{rr}
	-1/2& -\sqrt{3}/2\\
	\sqrt{3}/2&-1/2
\end{array}\right)
\end{equation}
Then, we have
\begin{equation}
\fl
\braket{\phi_+^3}
=\braket{(\mathcal{B}\phi_+)^3}
=\Braket{\left(-\frac{1}{2}\phi_+ + \frac{\sqrt{3}}{2}\phi_-\right)^3}
=-\frac{1}{8}\braket{\phi_+^3}-\frac{9}{8}\braket{\phi_+\phi_-^2}.
\end{equation}
Here we have used $\braket{\phi_+^2\phi_-}=\braket{\phi_-^3}=0$
again.
Similar equation holds for
$\braket{\phi_+\phi_-^2}=\braket{(\mathcal{B}\phi_+)(\mathcal{B}\phi_-)^2}$.
Assembling two equations,
we get the following equation:
\begin{equation}
\fl
\left(\begin{array}{l}\braket{\phi_+^3} \\\braket{\phi_+\phi_-^2}\end{array}\right)
=\left(\begin{array}{l}
	\braket{(\mathcal{B}\phi_+)^3} \\
	\braket{(\mathcal{B}\phi_+)(\mathcal{B}\phi_-)^2}
\end{array}\right)
=\left(\begin{array}{rr}
	-1/8 & -9/8 \\
	-3/8 & 5/8
\end{array}\right)
\left(\begin{array}{l}\braket{\phi_+^3} \\\braket{\phi_+\phi_-^2}\end{array}\right).
\end{equation}
Namely, $(\braket{\phi_+^3}, \braket{\phi_+\phi_-^2})$ must be an eigenvector
of the matrix in the right  side for eigenvalue 1.
The solution is 
\begin{equation}
\label{phiPlusCubeV}
\left(\begin{array}{l}
\braket{\phi_+^3} \\
\braket{\phi_+\phi_-^2}
\end{array}\right)
=\left(\begin{array}{r}1 \\-1\end{array}\right)\braket{\phi_+^3}.
\end{equation}
Substituting this solution into \eref{phi3ForVstep1},
we get
\begin{equation}
\braket{\phi(\theta)^3}
=\left(\cos(\theta)^3-3\cos(\theta)\sin(\theta)^2\right)\braket{\phi_+^3}
=\cos(3\theta)\braket{\phi_+^3}.
\end{equation}

\subsection{$A_4(\theta)$ for V}
\label{secA4forV}
The invariance of integrals by $\phi_\pm \to \mathcal{B}\phi_\pm$
yields
\begin{equation}
\label{phi4VComponents}
\fl
{\left(\begin{array}{l}
\braket{\phi_+^4} \\
\braket{\phi_+^2\phi_-^2} \\
\braket{\phi_-^4}
\end{array}\right)}
={\left(\begin{array}{rrr}
1/16 & 9/8 & 9/16 \\
3/16 & -1/8 & 3/16 \\
9/16 & 9/8 & 1/16
\end{array}\right)}%
{\left(\begin{array}{l}
\braket{\phi_+^4} \\
\braket{\phi_+^2\phi_-^2} \\
\braket{\phi_-}^4
\end{array}\right)
}
={\left({\begin{array}{c}1 \\1/3 \\1\end{array}}\right)}
\braket{\phi_+^4}.
\end{equation}
\begin{equation}
\label{phi4V}
\fl
\therefore
\braket{\phi(\theta)^4}
=\cos(\theta)^4\braket{\phi_+^4}
	+6\cos(\theta)^4\sin(\theta)^2\braket{\phi_+^2\phi_-^2}
	+\sin(\theta)^4\braket{\phi_-^4}
=\braket{\phi_+^4}.
\end{equation}

Now, we proceed to calculate 
$\sum_\alpha \lambda_\alpha^{-1}\braket{\phi(\theta)^2\psi_\alpha}^2$;
\begin{equation}
\label{phiSquarePsiV}
\fl
\mbox{For }
\mathcal{P_C}\psi_{\alpha+}=\psi_{\alpha+}
\mbox{: }
\left(\begin{array}{c}
\braket{\phi_+^2\psi_{\alpha_+}}\\
\braket{\phi_-^2\psi_{\alpha_+}}
\end{array}\right)
=\left(\begin{array}{cc}
1/4 & 3/4 \\
3/4 & 1/4
\end{array}\right)
\left(\begin{array}{c}
\braket{\phi_+^2\psi_{\alpha_+}}\\
\braket{\phi_-^2\psi_{\alpha_+}}
\end{array}\right)
=\left(\begin{array}{c}1 \\1\end{array}\right)
	\braket{\phi_+^2\psi_{\alpha_+}}.
\end{equation}
\begin{equation}
\therefore
\braket{\phi(\theta)^2 \psi_{\alpha+}}
=\cos(\theta)^2\braket{\phi_+^2\psi_+}+\sin(\theta)^2\braket{\phi_-^2\psi_+}
=\braket{\phi_+^2\psi_{\alpha+}}.
\end{equation}
\begin{equation}
\fl
\mbox{For }\mathcal{P_C}\psi_{\beta-}=\psi_{\beta-}
\mbox{: }
\braket{\phi_+\phi_-\psi_{\beta-}}
=-\frac{1}{2}\braket{\phi_+\phi_-\psi_{\beta-}}
=0.
\end{equation}
\begin{equation}
\label{phiPlusPhiMinusPsiMinusV}
\therefore
\braket{\phi(\theta)^2\psi_{\beta-}}
=2\cos(\theta)\sin(\theta)\braket{\phi_+\phi_-\psi_{\beta-}}
=0.
\end{equation}

\begin{eqnarray}
\label{phiPlusSquarePsiMinusV}
\fl
\mbox{For }\mathcal{P_C}\psi_{\gamma\pm}=0\mbox{: }
\left(\begin{array}{l}
\braket{\phi_+^2\psi_{\gamma+}}\\
\braket{\phi_-^2\psi_{\gamma+}} \\
\braket{\phi_+\phi_-\psi_{\gamma-}}
\end{array}\right)
&=
\left(\begin{array}{rrr}
-1/8 & -3/8 & \mp 3/4 \\
-3/4 & -1/8 & \pm 3/4 \\
\mp 3/8 & \pm 3/8 & 1/4
\end{array}\right)
\left(\begin{array}{l}
\braket{\phi_+^2\psi_{\gamma+}}\\
\braket{\phi_-^2\psi_{\gamma+}} \\
\braket{\phi_+\psi_-\psi_{\gamma-}}
\end{array}\right) \nonumber\\
&=\left(\begin{array}{r}1 \\-1 \\-1\end{array}\right)
	\braket{\phi_+^2\psi_{\gamma+}},
\end{eqnarray}
where
upper and lower sign represent
the sign for $\psi_\gamma$ in representation V 
and in VI respectively.
Therefore,
\begin{eqnarray}
\fl
\braket{\phi(\theta)^2\psi_{\gamma+}}
=\cos(\theta)^2\braket{\phi_+^2\psi_{\gamma+}}
	+\sin(\theta)^2\braket{\phi_-^2\psi_{\gamma+}}
=\cos(2\theta)\braket{\phi_+^2\psi_{\gamma+}},\\
\fl
\braket{\phi(\theta)^2\psi_{\gamma-}}
=2\cos(\theta)\sin(\theta)\braket{\phi_+\phi_-\psi_{\gamma-}}
=-\sin(2\theta)\braket{\phi_+^2\psi_{\gamma+}},
\end{eqnarray}
\begin{equation}
\fl
\sum_{\mathcal{P_C}\psi_{\gamma}=0}
	\frac{1}{\lambda_\gamma}
	\left(\braket{\phi(\theta)^2\psi_{\gamma+}}^2
		+\braket{\phi(\theta)^2\psi_{\gamma-}}^2
		\right)
=\sum_{\mathcal{P_C}\psi_{\gamma}=0}
	\frac{1}{\lambda_\gamma}
	\braket{\phi_+^2\psi_{\gamma+}}^2.
\end{equation}
Assembling these terms, we get
\begin{eqnarray}
\sum_\alpha \frac{1}{\lambda_\alpha}
	\braket{\phi(\theta)^2\psi_\alpha}^2 
=\sum_{\alpha}
	\frac{1}{\lambda_\alpha}
	\braket{\phi_+^2\psi_{\alpha+}}^2.
\end{eqnarray}

So, we get, 
\begin{eqnarray}
\label{A4forV}
A_4(\theta)
=\braket{\phi_+^4}
	-\sum_{\alpha}
		\frac{3}{\lambda_\alpha}\braket{\phi_+^2\psi_{\alpha+}}^2
=A_4(0).
\end{eqnarray}

\section{$A_k(\theta)$, $k=3,4,5,6$ for VI}
\label{AkForVI}
In this section, we calculate $A_k(\theta)$, $k=3,4,5,6$ for bifurcation VI.

\subsection{$A_3(\theta), A_5(\theta), \dots$ for VI}
By $\mathcal{M}\phi(\theta)=-\phi(\theta)$
and the theorem \ref{theoryInvarianceOfSLS},
The reduced action is an even function of $r$:
$S_{LS}(r,\theta)=S_{LS}(-r,\theta)$.
Therefore all $A_{2 n+1}(r,\theta)=0$ for $n=1,2,\dots$.
However, direct check will be interesting.

$A_3(\theta)=\braket{\phi(\theta)^3}=0$
is obvious by $\mathcal{M}\phi(\theta)=-\phi(\theta)$.

Three terms contribute to $A_5(\theta)$:
\begin{equation}
\fl
\braket{\phi(\theta)^5},\,
\sum_\alpha
	\frac{
		\braket{\phi(\theta)^2\psi_\alpha}
		\braket{\phi(\theta)^3\psi_\alpha}
	}{\lambda_\alpha},\,
\sum_{\alpha,\beta}
	\frac{
		\braket{\phi(\theta)^2\psi_\alpha}
		\braket{\phi(\theta)\psi_\alpha\psi_\beta}
		\braket{\phi(\theta)^2\psi_\beta}
	}{\lambda_\alpha \lambda_\beta}.
\end{equation}
The first term is zero, because $\mathcal{M}\phi(\theta)=-\phi(\theta)$.
The second term is zero, 
because
$\braket{\phi^2\psi_\alpha}=0$ 
if $\mathcal{M}\psi_\alpha=-\psi_\alpha$,
and 
$\braket{\phi^3\psi_\alpha}=0$ 
if $\mathcal{M}\psi_\alpha=\psi_\alpha$.
For the last term,
$\psi_\alpha$ and $\psi_\beta$ should belongs to
$\mathcal{M}'=1$ to give non-zero values to
$\braket{\phi(\theta)^2\psi_\alpha}$ and 
$\braket{\phi(\theta)^2\psi_\beta}$.
However, it gives
$\braket{\phi(\theta)\psi_\alpha\psi_\beta}=0$.
Therefore, the last term is also zero.
So we get
$A_5(r,\theta)=0$.

Therefore $\mathcal{M}\phi(\theta)=-\phi(\theta)$
surely ensure $A_3(\theta)=A_5(\theta)=0$
as predicted by the theorem \ref{theoryInvarianceOfSLS}.

\subsection{$A_4(\theta)$ for VI}
The term $A_4(\theta)$ has the same expression as in \eref{A4forV}.
\begin{eqnarray}
\label{A4forVI}
A_4(\theta)
=\braket{\phi_+^4}
	-\sum_{\alpha}
		\frac{3}{\lambda_\alpha}\braket{\phi_+^2\psi_{\alpha+}}^2
=A_4(0).
\end{eqnarray}
Because we can use $\mathcal{C}$ to calculate the relations between
$\braket{\phi_+^4}$, $\braket{\phi_+^2\phi_-^2}$ and $\braket{\phi_-^4}$,
etc.,
for example $\braket{\phi_+^4}=\braket{(\mathcal{C}\phi_+)^4}$.
Since the representation of $\mathcal{C}$ in VI is the same as that
of $\mathcal{B}$ in V,
we get the same relations in V.

\subsection{$A_6(\theta)$ for VI}
\label{secA6forVI}
Seven terms contribute to $A_6(\theta)$:
Here we pick up two simpler terms 
$\braket{\phi(\theta)^6}$ and 
$\sum \braket{\phi(\theta)^3\psi_\alpha}^2/\lambda_\alpha$.
The invariance of  integrals under $\phi_\pm \to \mathcal{C}\phi_\pm$
yields
\begin{equation}
\left(\begin{array}{l}
\braket{\phi_+^6} \\
\braket{\phi_+^4\phi_-^2} \\
\braket{\phi_+^2\phi_-^4}\\
\braket{\phi_-^6}
\end{array}\right)
=
\frac{1}{64}\left(\begin{array}{rrrr}
1 & 45 & 135 & 27\\
3 & 31 & -27 & 9 \\
9 & -27 & 31 & 3 \\
27 & 135 & 45 & 1
\end{array}\right)
\left(\begin{array}{l}
\braket{\phi_+^6} \\
\braket{\phi_+^4\phi_-^2} \\
\braket{\phi_+^2\phi_-^4}\\
\braket{\phi_-^6}
\end{array}\right).
\end{equation}
There are two independent solutions
\begin{equation}
\left(\begin{array}{l}
\braket{\phi_+^6} \\
\braket{\phi_+^4\phi_-^2} \\
\braket{\phi_+^2\phi_-^4}\\
\braket{\phi_-^6}
\end{array}\right)
=\left(\begin{array}{r}1 \\-2/5 \\3/5 \\0\end{array}\right)\braket{\phi_+^6}
	+\left(\begin{array}{r}0 \\3/5 \\-2/5 \\1\end{array}\right)\braket{\phi_-^6}.
\end{equation}
\begin{eqnarray}
\fl
\therefore
\braket{\phi(\theta)^6}
&=\cos(\theta)^6\braket{\phi_+^6}
	+15\cos(\theta)^4\sin(\theta)^2\braket{\phi_+^4\phi_-^2}
	+15\cos(\theta)^2\sin(\theta)^4\braket{\phi_+^2\phi_-^4}
	\sin(\theta)^6\braket{\phi_-^6}\nonumber\\
\fl
&=\cos(3\theta)^2\braket{\phi_+^6}
	+\sin(3\theta)^2\braket{\phi_-^6}.
\end{eqnarray}

Now,
let us proceed to calculate 
$\sum \braket{\phi(\theta)^3\psi_\alpha}^2/\lambda_\alpha$.
\begin{eqnarray}
\fl
\mbox{For }\mathcal{P_C}\psi_\alpha=\psi_\alpha
\mbox{: }
\left(\begin{array}{r}
\braket{\phi_+^3\psi_{\alpha+}}\\
\braket{\phi_+\phi_-^2\psi_{\alpha+}}
\end{array}\right)
=\left(\begin{array}{r}1 \\-1\end{array}\right)
	\braket{\phi_+^3\psi_{\alpha+}},\\
\fl
\left(\begin{array}{r}
\braket{\phi_+^2\phi_-\psi_{\alpha-}}\\
\braket{\phi_-^3\psi_{\alpha-}}
\end{array}\right)
=
\left(\begin{array}{rr}
5/8 & -3/8 \\
-9/8 & -1/8
\end{array}\right)
\left(\begin{array}{r}
\braket{\phi_+^2\phi_-\psi_{\alpha-}}\\
\braket{\phi_-^3\psi_{\alpha-}}
\end{array}\right)
=\left(\begin{array}{c}-1 \\1\end{array}\right)
	\braket{\phi_-^3\psi_{\alpha-}}.
\end{eqnarray}
\begin{equation}
\fl
\therefore
\sum_{\mathcal{P_C}\psi_\alpha=\psi_\alpha}
	\!\!\!\!\!\!\frac{\braket{\phi(\theta)^3\psi_\alpha}^2}{\lambda_\alpha}
={\left(\sum_{\mathcal{P_C}\psi_{\alpha+}=\psi_{\alpha+}}
	\!\!\!\!\!\!\frac{\braket{\phi_+^3\psi_{\alpha+}}^2}{\lambda_\alpha}
	\right)}\cos(3\theta)^2
+{\left(\sum_{\mathcal{P_C}\psi_{\beta-}=\psi_{\beta-}}
	\!\!\!\!\!\!\frac{\braket{\phi_-^3\psi_{\beta-}}^2}{\lambda_\beta}
	\right)}\sin(3\theta)^2.
\end{equation}

\begin{eqnarray}
\fl
\mbox{For }
\mathcal{P_C}\psi_\alpha=0
\mbox{: }
\left(\begin{array}{r}
	\braket{\phi_+^3\psi_{\alpha+}} \\
	\braket{\phi_+ \phi_-^2\psi_{\alpha+}} \\
	\braket{\phi_+^2 \phi_-\psi_{\alpha-}} \\
	\braket{\phi_-^3 \psi_{\alpha-}}
\end{array}\right)
&=\frac{1}{16}
	\left(\begin{array}{rrrr}
	1 & 9 & \pm 9 & \pm 9 \\
	3 & -5 & \pm 3 & \pm 3 \\
	\pm 3 &\pm 3 & -5 & 3 \\
	\pm 9 & \pm 9 & 9 & 1
	\end{array}\right)
\left(\begin{array}{r}
	\braket{\phi_+^3\psi_{\alpha+}} \\
	\braket{\phi_+ \phi_-^2\psi_{\alpha+}} \\
	\braket{\phi_+^2 \phi_-\psi_{\alpha-}} \\
	\braket{\phi_-^3 \psi_{\alpha-}}
\end{array}\right)\nonumber\\
&=\left(\begin{array}{c}1 \\1/3 \\\pm 1/3 \\\pm 1\end{array}\right)
	\braket{\phi_+^3\psi_{\alpha+}},
\end{eqnarray}
where
$\pm$  stands for the sign
for $\psi_\alpha$ in representation VI and V respectively
\begin{equation}
\fl
\therefore
\sum_{\mathcal{P_C}\psi_{\alpha}=0}
	\frac{\braket{\phi(\theta)^3\psi_\alpha}}{\lambda_\alpha}^2
=\sum_{\mathcal{P_C}\psi_{\alpha}=0}
	\frac{1}{\lambda_\alpha}
	\left(\braket{\phi(\theta)^3\psi_{\alpha+}}^2
	+\braket{\phi(\theta)^3\psi_{\alpha-}}^2
	\right)
=\sum_{\mathcal{P_C}\psi_{\alpha+}=0}
	\frac{\braket{\phi_+^3\psi_{\alpha+}}^2}{\lambda_\alpha},
\end{equation}
which is $\theta$ independent.

Similarly, all terms in $A_6(\theta)$ contribute in the form
$A_{6+}\cos(3\theta)^2+A_{6-}\sin(3\theta)^2$.

\section{The eigenvalue of Hessian at bifurcated solution}
\label{EigenvaluesOfH}
In this section, we calculate the eigenvalue of Hessian $\mathcal{H}$
at bifurcated solutions,
\begin{eqnarray}
\mathcal{H}(q_b)=\mathcal{H}(q_o+r\phi(\theta)+r\epsilon\psi)
\end{eqnarray}
by ordinary perturbation methods
using the term $r\phi(\theta)+r\epsilon\psi$ for the perturbation term.
Here, we used abbreviated notation $\epsilon\psi$ for
\begin{equation}
\epsilon\psi
=\sum_\alpha \epsilon_\alpha\psi_\alpha.
\end{equation}
Since we are considering a bifurcated solution
$q_b=q_o+r\phi(\theta)+r\epsilon\psi$,
$\phi(\theta)$ and $\epsilon\psi$ are filtered by a projection operator
for this solution: $\mathcal{P}q_b=q_b$.

The zero order Hessian and the perturbation term are
\begin{eqnarray}
\mathcal{H}(q_o+r\phi(\theta)+r\epsilon\psi)=\mathcal{H}(q_o)+\Delta U,\\
\mathcal{H}(q_o)=-\frac{d^2}{d t^2}+\frac{\partial^2 U}{\partial q^2},\\
\Delta U=\mathcal{H}(q_o+r\phi(\theta)+r\epsilon\psi)
			-\mathcal{H}(q_o)
		=\sum_{n\ge 1}\frac{r^n}{n!}
			(\phi(\theta)+\epsilon\psi)^n\frac{\partial^{n+2}U}{\partial q^{n+2}}.
\end{eqnarray}
For arbitrary functions $f$ and $g$, 
\begin{equation}
\int d t \Delta U f g
=\sum_{n\ge 1}\frac{r^n}{n!}
	\braket{(\phi(\theta)+\epsilon\psi)^n f g}.
\end{equation}
The aim of this section is to calculate the eigenvalue to order $r^2$.
Since 
$\Delta U$
is order $r$, calculations to second order perturbation
are enough.

\subsection{Non-degenerate cases}
For $\kappa$ is not degenerate,
the ordinary perturbation method yields the eigenvalue
of Hessian,
\begin{eqnarray}
\fl
K
&=\kappa
	+\int d t\, \Delta U \phi^2
	-\sum_\alpha \frac{1}{\lambda_\alpha-\kappa}
		\left(\int d t\, \Delta U \phi \psi_\alpha\right)^2
	+O(\Delta U^3)\nonumber\\
\fl
&=\kappa
	+r\braket{(\phi+\epsilon\psi)\phi^2}
	+\frac{r^2}{2}\braket{\phi^4}
	-\sum_\alpha\frac{r^2}{\lambda_\alpha}
		\braket{\phi^2\psi_\alpha}^2
	+O(r^3),
\end{eqnarray}
where we have used $O(\Delta U)=O(r)$, $\kappa=O(r)$ and $\epsilon=O(r)$.
Using 
\begin{equation}
\epsilon_\alpha
=-\frac{1}{\lambda_\alpha}\sum_{n\ge 3}\frac{r^{n-2}}{(n-1)!}
	\braket{(\phi+\epsilon\psi)^{n-1}\psi_\alpha}\nonumber
=-\frac{r}{2\lambda_\alpha}\braket{\phi^2\psi_\alpha}+O(r^2),
\end{equation}
we get
\begin{equation}
K=\kappa
	+r\braket{\phi^3}
	+\frac{r^2}{2}\left(
		\braket{\phi^4}
		-\sum_\alpha \frac{3}{\lambda_\alpha}
			\braket{\phi^2\psi_\alpha}^2
		\right)
	+O(r^3).
\end{equation}
This is equal to $d^2 S_{LS}(r)/d r^2$ of $S_{LS}(r)$ in \eref{SLS}.

\subsection{Doubly degenerate cases}
For bifurcations V and VI,
the eigenvalue $\kappa$ is doubly degenerate.
Let $\phi_1$ and $\phi_2$ be eigenfunctions for $\kappa$,
$\phi_1$ be the function that contributes to the bifurcated function
$q_b=q_o+\phi_1+\epsilon\psi$,
and $\phi_2$ be another.
Let $\mathcal{P}$ be the projection operator
for $\mathcal{P}q_b=q_b$,
then $\mathcal{P}\phi_1=\phi_1$ and $\mathcal{P}\phi_2=0$
follows.
Here $\mathcal{P}$ is one of
$\mathcal{P_M}\mathcal{P_S}$, 
$\mathcal{P_S}$, and
$\mathcal{P_{MS}}$.
Note that
$\Delta U$ is diagonalised by $\phi_1$ and $\phi_2$:
\begin{equation}
\fl
\int d t\, \Delta U \phi_1 \phi_2
=\sum_n \frac{r^n}{n!}\braket{
	(\phi_1+\epsilon\psi)^n \phi_1\phi_2
	}
=\sum_n \frac{r^n}{n!}\braket{
	(\phi_1+\epsilon\psi)^n \phi_1\mathcal{P}\phi_2
	}
=0.
\end{equation}
Here, we used the same arguments for the theorem \ref{theoremProjectionOperator}.
Then, the perturvative calculations are similar to that of non-degenerate cases
to the second order:
\begin{eqnarray}
\fl
K_1=\kappa
	+\int d t \Delta U \phi_1^2
	-\sum_\alpha\frac{1}{\lambda_\alpha-\kappa}\left(
		\int d t\, \Delta U \phi_1\psi_\alpha 
		\right)^2
	+O(\Delta U^3),\\
\fl
K_2=\kappa
	+\int d t \Delta U \phi_2^2
	-\sum_\alpha\frac{1}{\lambda_\alpha-\kappa}\left(
		\int d t\, \Delta U \phi_2 \psi_\alpha 
		\right)^2
	+O(\Delta U^3).
\end{eqnarray}
Note that  $\psi_\alpha$ in the second term of $K_2$
satisfies $\mathcal{P}\psi_\alpha=0$
because $\mathcal{P}\phi_2=0$ and 
$\Delta U$ is invariant.

\subsubsection{For bifurcated solution in V.}
For this case, $\mathcal{P}=\mathcal{P_M}\mathcal{P_S}$.
Then $\phi_1=\phi_+$ and $\phi_2=\phi_-$,
and $\epsilon\psi=\sum_\alpha \epsilon_\alpha \psi_{\alpha+}$
where $\mathcal{S}\phi_\pm = \pm \phi_\pm$ and 
$\mathcal{P_S}\psi_{\alpha+}=\psi_{\alpha+}$.
Then,
\begin{equation}
\epsilon_\alpha
=-\frac{r}{2\lambda_\alpha}\braket{\phi_+^2\psi_{\alpha+}}+O(r^2),
\end{equation}
and
\begin{eqnarray}
K_1
&=\kappa+\int d t\, \Delta U \phi_+^2
	-\sum_\alpha \frac{1}{\lambda_\alpha-\kappa}
		\left(
			\int d t\, \Delta U \phi_+ \psi_{\alpha+}
		\right)^2\nonumber\\
&=\kappa+r\braket{\phi_+^3}
	+\frac{r^2}{2}\left(
		\braket{\phi_+^4}
		-\sum_\alpha \frac{3}{\lambda_\alpha}
			\braket{\phi_+^2\psi_{\alpha+}}^2
		\right)
	+O(r^3).
\end{eqnarray}
This is equal to $d^2 S_{LS}(r)/d r^2$ of $S_{LS}(r)$ in \eref{SLS}.
For $K_2$,
\begin{eqnarray}
\fl
\int d t\, \Delta U \phi_-^2\nonumber\\
\fl
=r\braket{\phi_+\phi_-^2}
	-\sum_\alpha \frac{r^2}{2\lambda_\alpha}
		\braket{\phi_+^2\psi_{\alpha+}}
		\braket{\phi_-^2\psi_{\alpha+}}
	+\frac{r^2}{2}\braket{\phi_+^2\phi_-^2}\nonumber\\
\fl
=-r\braket{\phi_+^3}
	-\!\!\!\sum_{\mathcal{P_C}\psi_{\alpha+}=\psi_{\alpha+}}
		 \frac{r^2}{2\lambda_\alpha}\braket{\phi_+^2\psi_{\alpha+}}^2
	+\!\!\!\sum_{\mathcal{P_C}\psi_{\beta+}=0}
		 \frac{r^2}{2\lambda_\beta}\braket{\phi_+^2\psi_{\beta+}}^2
	+\frac{r^2}{6}\braket{\phi_+^4}
	+O(r^3).
\end{eqnarray}
Here we have used the relations in \ref{secA3forV} and \ref{secA4forV}.
On the other hand,
\begin{eqnarray}
\fl
-\sum_\alpha \frac{1}{\lambda_\alpha-\kappa}
	\left(
		\int d t\, \Delta U \phi_-\psi_{\alpha-}
	\right)^2
=-\sum_\alpha \frac{r^2}{\lambda_\alpha}
	\braket{\phi_+\phi_-\psi_{\alpha-}}^2+O(r^3)\nonumber\\
=-\sum_{\mathcal{P_C}\psi_{\alpha-}=0}
		\frac{r^2}{\lambda_\alpha}\braket{\phi_+^2\psi_{\alpha+}}^2
		+O(r^3).
\end{eqnarray}
Here, we have used \eref{phiPlusPhiMinusPsiMinusV} and
\eref{phiPlusSquarePsiMinusV}.
So, we get
\begin{eqnarray}
\fl
K_2
&=\kappa-r\braket{\phi_+^3}
	-\!\!\!\!\!\sum_{\mathcal{P_C}\psi_{\alpha+}=\psi_{\alpha+}}
		 \frac{r^2}{2\lambda_\alpha}\braket{\phi_+^2\psi_{\alpha+}}^2
	-\!\!\!\!\!\sum_{\mathcal{P_C}\psi_{\beta+}=0}
		 \frac{r^2}{2\lambda_\beta}\braket{\phi_+^2\psi_{\beta+}}^2
	+\frac{r^2}{6}\braket{\phi_+^4}
	+O(r^3)\nonumber\\
\fl
&=\kappa
	-r\braket{\phi_+^3}
	-\sum_{\alpha}
		 \frac{r^2}{2\lambda_\alpha}\braket{\phi_+^2\psi_{\alpha+}}^2
	+\frac{r^2}{6}\braket{\phi_+^4}
	+O(r^3)\nonumber\\
\fl
&=\kappa-A_3(0)r+\frac{A_4(0)}{3!}r^2+O(r^3).
\end{eqnarray}
Here we have used the relations in \ref{secA3forV} and \ref{secA4forV}.
Substituting $r=r_b$ in \eref{rbV},
we get the same expression as in \eref{K2V}.

\subsubsection{For bifurcated solution in VI with $\mathcal{P}=\mathcal{P_S}$.}
For this solution,
\begin{eqnarray}
\phi_1=\phi_+,\, \phi_2=\phi_-,\,
\epsilon\psi=\sum_\alpha \epsilon_\alpha \psi_{\alpha+},\\
\mathcal{P_M}\phi_\pm=0,\\
\epsilon_\alpha=-\frac{r}{2\lambda_\alpha}\braket{\phi_+^2\psi_{\alpha+}}.
\end{eqnarray}
Since, $\braket{\phi_+^3}=0$ by $\mathcal{M}\phi_+=-\phi_+$,
\begin{eqnarray}
\fl
K_1=\kappa
	+\int d t\, \Delta U \phi_+^2
	-\sum_\alpha \frac{1}{\lambda_\alpha-\kappa}
		\left(
			\int d t\, \Delta U \phi_+\psi_{\alpha+}
		\right)^2
	+O(\Delta U^3)\nonumber\\
=\kappa + \frac{A_4(0)}{2}r^2+O(r^3).
\end{eqnarray}
This is equal to $\partial_r^2 S_{LS}(r,\theta)$ to $r^2$ in \eref{SLSforVI}.
On the other hand,
\begin{eqnarray}
K_2
=\kappa
	+\int d t\, \Delta U \phi_-^2
	-\sum_\alpha \frac{1}{\lambda_\alpha-\kappa}
		\left(
			\int d t\, \Delta U \phi_-\psi_{\alpha-}
		\right)^2
	+O(\Delta U^3)\nonumber\\
=\kappa+\frac{r^2}{6}\braket{\phi_+^4}
	-\sum_{\alpha}
		\frac{r^2}{2\lambda_\alpha}\braket{\phi_+^2\psi_{\alpha+}}^2
	+O(r^3)\nonumber\\
=\kappa+\frac{r^2}{3!}A_4(0)+O(r^3).\label{VIK2Plus}
\end{eqnarray}
Here we have used the relations in \ref{secA3forV} and \ref{secA4forV}.
Substituting $r=r_b$ in \eref{rbPlusForVI},
\begin{equation}
K_2=0+O(\kappa^2)
\end{equation}
that is equal to $r^{-2}\partial_\theta^2 S_{LS}(r,\theta)$ 
in \eref{VIpatialThetaSquareS} to $\kappa$.

\subsubsection{For bifurcated solution for VI with $\mathcal{P}=\mathcal{P_{MS}}$.}
For this solution,
\begin{eqnarray}
\phi_1=\phi_-,\, \phi_2=\phi_+,\\
\mathcal{P_M}\phi_\pm=0,\\
\Delta U
=\sum_{n\ge 1} \frac{r^n}{n!}(\phi_-+\mathcal{P_{MS}}\,\epsilon\psi)^n
			\frac{\partial^{n+2}U}{\partial q^{n+2}}.
\end{eqnarray}
For $\epsilon_\alpha=-r\braket{\phi_-^2\psi_{\alpha}}/(2\lambda_\alpha)+O(r^2)$,
only terms of $\psi_{\alpha+}$ survive by $\mathcal{S}$ symmetry,
\begin{equation}
\epsilon_\alpha=-\frac{r}{2\lambda_\alpha}\braket{\phi_-^2\psi_{\alpha+}}.
\end{equation}

\begin{eqnarray}
K_1=\kappa
	+\int d t\, \Delta U \phi_-^2
	-\sum_\alpha \frac{1}{\lambda_\alpha-\kappa}
		\left(
			\int d t\, \Delta U \phi_- \psi_\alpha
		\right)^2
	+O(\Delta U^3)\nonumber\\
=\kappa
	+\frac{1}{2}\braket{\phi_-^4}
	-\sum_\alpha \frac{3r^2}{2\lambda_\alpha}\braket{\phi_-^2\psi_+}^2
	+O(r^3)\nonumber\\
=\kappa
	+\frac{A_4(0)}{2}r^2+O(r^3).
\end{eqnarray}
This is equal to $\partial_r^2 S_{LS}(r,\theta)$ to $r^2$ in \eref{SLSforVI}.
On the other hand,
\begin{eqnarray}
K_2=\kappa
	+\int d t\, \Delta U\phi_+^2
	-\sum_\alpha \frac{1}{\lambda_\alpha-\kappa}
		\left(
			\int d t\, \Delta U \phi_+\psi_\alpha
		\right)^2
	+O(\Delta U^3)\nonumber\\
=\frac{r^2}{2}\braket{\phi_-^2\phi_+^2}
	-\frac{r^2}{2\lambda_\alpha}
		\braket{\phi_+^2\psi_{\alpha+}}
		\braket{\phi_-^2\psi_{\alpha+}}
	-\sum \frac{r^2}{\lambda_\alpha}
		\braket{\phi_-\phi_+\psi_\alpha}^2\nonumber\\
=\kappa+\frac{r^2}{3!}A_4(0)+O(r^3)
\end{eqnarray}
This is the same expression for $K_2$ in \eref{VIK2Plus} to $r^2$ term,
and is equal to $r^{-2}\partial_\theta^2 S_{LS}(r,\theta)$ 
in \eref{VIpatialThetaSquareS} to $\kappa$.

\vspace{2pc}
\noindent{\small
{\bf ORCID iDs}\\
Toshiaki Fujiwara: https://orcid.org/0000-0002-6396-3037\\
Hiroshi Fukuda: https://orcid.org/0000-0003-4682-9482\\
Hiroshi Ozaki: https://orcid.org/0000-0002-8744-3968\\
}

\section*{References}
\bibliographystyle{plain}
\bibliography{figureEight2019.bib}

\end{document}